\documentclass[11pt, journal, draftclsnofoot, onecolumn]{IEEEtran}
\hyphenation{op-tical net-works semi-conduc-tor}

\usepackage{amsthm}
\usepackage{amsfonts}
\usepackage{amssymb}
\usepackage{hyperref}
\usepackage{graphicx}
\usepackage{enumerate}
\usepackage{amsmath}
\usepackage{color}



\newcommand{\bp}{\begin{proof} \small }
\newcommand{\ep}{\end{proof} \normalsize}
\newcommand{\epx}{\end{proof} \small}
\newcommand{\bpa}{\begin{proofappx} \footnotesize }
\newcommand{\epa}{\end{proofappx} \small }
\newtheorem{theorem}{Theorem}
\newtheorem{proposition}{Proposition}

\newcommand{\bm}[1]{\mbox{\boldmath $#1$}}

\newcommand{\be}{\begin{equation}}
\newcommand{\ee}{\end{equation}}
\newcommand{\bs}{\begin{subequations}}
\newcommand{\es}{\end{subequations}}
\newcommand{\bq}{\begin{eqnarray}}
\newcommand{\eq}{\end{eqnarray}}
\newcommand{\bqn}{\begin{eqnarray*}}
\newcommand{\eqn}{\end{eqnarray*}}

\newcommand{\ba}{\left[ \begin{array}}
\newcommand{\ea}{\\ \end{array} \right]}
\newcommand{\ben}{\begin{enumerate}}
\newcommand{\een}{\end{enumerate}}

\def\eps {\epsilon}

\def\K{{\boldsymbol{K}}}

\def\a{{\boldsymbol{a}}}

\def\real{{\mathchoice%
{\hbox{\rm\setbox1=\hbox{I}\copy1\kern-.45\wd1 R}}
{\hbox{\rm\setbox1=\hbox{I}\copy1\kern-.45\wd1 R}}
{\hbox{\scriptsize\rm\setbox1=\hbox{I}\copy1\kern-.45\wd1 R}}
{\hbox{\scriptsize\rm\setbox1=\hbox{I}\copy1\kern-.45\wd1 R}}}}

\def\Zint{{\mathchoice{\setbox1=\hbox{\sf Z}\copy1\kern-.75\wd1\box1}
{\setbox1=\hbox{\sf Z}\copy1\kern-.75\wd1\box1}
{\setbox1=\hbox{\scriptsize\sf Z}\copy1\kern-.75\wd1\box1}
{\setbox1=\hbox{\scriptsize\sf Z}\copy1\kern-.75\wd1\box1}}}
\newcommand{\complex}{ \hbox{\rm C\kern-0.45em\rule[.07em]{.02em}{.58em}%
\kern 0.43em}}

\begin{document}
%
\title{Information Sharing in Networks \\of Strategic Agents}
%
%
%

\author{Jie~Xu,
        Yangbo~Song,
        and~Mihaela~van~der~Schaar,~\IEEEmembership{Fellow,~IEEE}
\thanks{Jie Xu and Mihaela van der Schaar are with the Dept. of Electrical Engineering, University of California, Los Angeles (UCLA). Emails: jiexu@ucla.edu, mihaela@ee.ucla.edu.
}
\thanks{Yangbo Song is with the Dept. of Economics, University of California, Los Angeles (UCLA). Email: darcy07@ucla.edu.}}

\maketitle

\begin{abstract}
To ensure that social networks (e.g. opinion consensus, cooperative estimation, distributed learning and adaptation etc.) proliferate and efficiently operate, the participating agents need to collaborate with each other by repeatedly sharing information. However, sharing information is often costly for the agents while resulting in no direct immediate benefit for them. Hence, lacking incentives to collaborate, strategic agents who aim to maximize their own individual utilities will withhold rather than share information, leading to inefficient operation or even collapse of networks. In this paper, we develop a systematic framework for designing \textit{distributed} rating protocols aimed at incentivizing the strategic agents to collaborate with each other by sharing information. The proposed incentive protocols exploit the ongoing nature of the agents' interactions to assign ratings and through them, determine future rewards and punishments: agents that have behaved as directed enjoy high ratings -- and hence greater future access to the information of others; agents that have not behaved as directed enjoy low ratings -- and hence less future access to the information of others. Unlike existing rating protocols, the proposed protocol operates in a distributed manner, online, and takes into consideration the underlying interconnectivity of agents as well as their heterogeneity. We prove that in many deployment scenarios the price of anarchy (PoA) obtained by adopting the proposed rating protocols is one. In settings in which the PoA is larger than one, we show that the proposed rating protocol still significantly outperforms existing incentive mechanisms such as Tit-for-Tat. Importantly, the proposed rating protocols can also operate efficiently in deployment scenarios where the strategic agents interact over time-varying network topologies where new agents join the network over time.
\end{abstract}

\begin{IEEEkeywords}
Repeated information sharing, social networks, distributed networks, incentive design, distributed rating protocol, repeated games.
\end{IEEEkeywords}

%
\IEEEpeerreviewmaketitle

\section{Introduction}
In recent years, extensive research efforts have been devoted to studying cooperative networks where agents interact with each other over a topology repeatedly, by sharing information such as measurements, estimates, beliefs, or opinions. Such networks involve various levels of coordinated behavior among agents in order to solve important tasks in an efficient and distributed manner such as target tracking, object detection, resource allocation, learning, inference, and estimation. Collaboration among the agents via repeated information sharing is critical for the enhanced performance and robustness of the distributed solution, as already demonstrated in various insightful studies on social learning in multi-agent networks \cite{KrishnamurthyA}-\cite{Jadbabaie}, belief consensus in social networks \cite{Chamley}\cite{Acemouglu}, distributed optimization in resource allocation problems \cite{Tsitsiklis}-\cite{Dimakis} and in the diffusion of information for adaptation and learning purposes \cite{Chen}-\cite{Sayed}. However, in many scenarios, participating in the cooperative process entails costs to the agents, such as the cost of producing, transmitting, and sharing information with their neighbors. In these situations, the cost of sharing information may outweigh the benefit of cooperation and agents may not see an immediate benefit to being cooperative. For networks where agents are strategic, meaning that they aim to maximize their own utilities by strategically choosing their actions, the agents will choose to participate in the collaborative process only if they believe this action is beneficial to their current and long-term interests. Absent incentives for collaboration, these networks will work inefficiently or can even collapse \cite{Lucky}. A distinct feature of the network under consideration is that agents' incentives can be coupled in a possibly extremely complex way due to the underlying topology. Thus, a key challenge to ensure the survivability and efficient operation of networks in the presence of selfish agents is the design of incentive schemes that adapt to the network topology and encourage the agents' cooperation in accordance with the network objective.

We propose to resolve the above incentive problem by exploiting the repeated interactions among agents to enable social reciprocation, by deploying a \textit{distributed} rating protocol. Such rating protocols are designed and implemented in a distributed manner and are tailored to the underlying topologies. The rating protocol, via the (non-strategic) Social Network Interface (SNI)\footnote{ For example, the SNIs are tamper-proof software/hardware modules that can communicate with other SNIs in the neighborhood. However, they do not communicate with a central entity and hence, they are also distributed.} with which each agent is equipped, recommends (online and in a distributed way) to every agent how much information they should share with their neighbors depending on each neighbor's current rating according to the network topology. We refer to this recommendation as the \textit{recommended strategy}. Importantly, the protocol has to be designed in such a way that this recommendation is incentive-compatible, meaning that agents have incentives to follow it. (We will later define a more formal version of ``incentive-compatibility''.)  In each period, agents have the freedom to decide how much information they should share with each of their neighbors. Their decision may comply or not with the strategy recommended (i.e. agents may follow or deviate from this strategy).  The agent's rating is then increased/decreased by the SNI based on its current rating, and whether it has followed/deviated from the recommended strategy. We refer to this as the \textit{rating update} rule. High-rated agents will be rewarded -- the protocol recommends more information sharing by their neighbors and hence they receive more benefit in the future; low-rated agents will be punished -- the protocol recommends less information sharing by their neighbors and hence, they receive less benefit in the future.

Next, we highlight two distinct features of the networks under consideration and the resulting key challenges for designing rating protocols for agents to cooperate. The first feature is that agents interact over an underlying topology. This is in stark contrast with existing works in repeated games relying on social reciprocation which assume that the agents are randomly matched \cite{Kandori}\cite{Zhang}\cite{Xu}. In this paper, agents' incentives are coupled in a complex manner since their utilities depend on the behavior of the other agents with which they are interconnected. Since agents have different neighbors, their incentives can also be very diverse. A recommended strategy and rating update rule may provide sufficient incentives for some agents to follow but may fail in incentivizing others. In the worst cases, even a single agent deviating from the recommended strategy may cause a ``chain effect'' where eventually all agents deviate, leading to the collapse of the network. Hence, the rating protocol must be designed to adapt to the specific network topology.

The second feature is that the networks under consideration are distributed and hence, they are informationally decentralized, in the sense that (i) communication can only occur between neighboring agents (and SNIs) and (ii) there is no central planner that can monitor the entire network and communicate to the individual agents information about each agent's behavior (e.g. its compliance with the recommended strategy in the past, its rating etc.). Decentralization prevents rating protocols proposed in prior works \cite{Zhang}\cite{Xu} from being applicable in the considered scenarios since they are designed and implemented in a centralized manner. Therefore, a new distributed rating protocol needs to be developed which can operate successfully in an informationally-decentralized network.

The remaining part of this paper is organized as follows. In Section II, we review related works and existing solutions, and highlight the key differences to this work. Section III outlines the system model and formulates the protocol design problem. The structure of the rating protocol is unraveled in Section IV. In Section V, we design the optimal rating protocol to maximize the social welfare. The performance of the optimal design is then analyzed in Section VI. Section VII studies the rating protocol design in a class of time-varying topologies. Section VIII provides numerical results to highlight the features of the proposal. Finally, we conclude this paper in Section XI.

\section{Related Works}

\begin{table*}[t]
\centerline{\includegraphics[scale = 1]{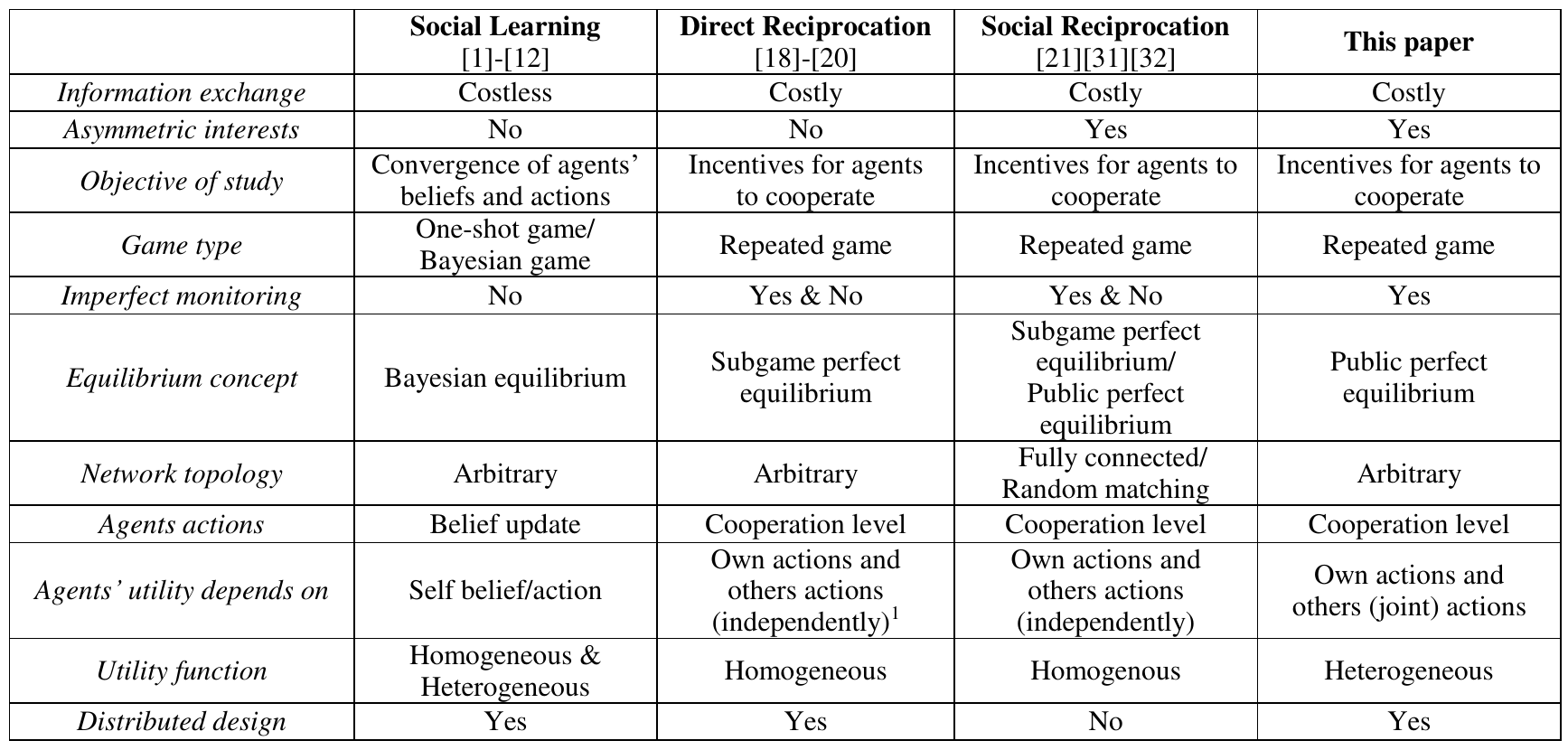}}
\caption{Comparison with existing works. }
\label{com_existing}
\vspace{-10pt}
\end{table*}

Collaboration among the agents via repeated information sharing is critical for the enhanced performance and robustness of various types of social networks \cite{KrishnamurthyA}-\cite{Sayed}. The main focus of this literature is on determining the resulting network performance if agents repeatedly share and process information in various ways. However, absent incentives and in the presence of selfish agents, these networks will work inefficiently or can even collapse \cite{Lucky}. Thus, the main focus of the current paper is how to incentivize strategic agents to share such information such that this type of social networks can operate efficiently.

A variety of incentive schemes has been proposed to encourage cooperation among agents (see e.g. \cite{Park} for a review of different game theoretic solutions). Two popular incentive schemes are pricing and differential service. Pricing schemes \cite{Bergemann}\cite{MacKie-Mason} use payments to reward and punish individuals for their behavior. However, they often require complex accounting and monitoring infrastructure, which introduce substantial communication and computation overhead. Differential service schemes, on the other hand, reward and punish individuals by providing differential services depending on their behavior. Differential services can be provided by the network operator. However, in many distributed information sharing networks, such a centralized network operator does not exist. Alternatively, differential services can also be provided by the other agents participating in the network since agents in the considered applications derive their utilities from their interactions with other agents \cite{Axelrod}-\cite{Jackson}\cite{Zhang}\cite{Xu}. Such incentive schemes are based on the principle of reciprocity and can be classified into direct (personal) reciprocation and social reciprocation. In direct (personal) reciprocation schemes (e.g. the widely adopted Tit-for-Tat strategy \cite{Axelrod}-\cite{Milan}), the behavior of an individual agent toward another is based on its personal experience with that agent. However, they only work when two interacting agents have common interests. In social reciprocation schemes \cite{Song}-\cite{Jackson}\cite{Zhang}\cite{Xu}, individual agents obtain some (public) information about other individuals (e.g. their ratings) and decide their behavior toward other agents based on this information.

Incentive mechanisms based on social reciprocation are often studied using the familiar framework of repeated games. In \cite{Song}, the information sharing game is studied in a narrower context of cooperative spectrum sensing and various simple strategies are investigated. Agents are assumed to be able to communicate and share information with all other agents, effectively forming a clique topology where the agents' knowledge of the network is complete and symmetric. However, such an assumption rarely holds in distributed networks where, instead, agents may interact over arbitrary topologies and have incomplete and asymmetric knowledge of the entire network. In such scenarios, simple strategies proposed in \cite{Song} will fail to work and the incentives design becomes significantly more challenging.

Contagion strategies on networks \cite{Kandori}-\cite{Jackson} are proposed as a simple strategy to provide incentives for agents to cooperate. However, such strategies do not perform well if monitoring is imperfect since any single error can lead to a network collapse. Even if certain forms of forgiveness are introduced, contagion strategies are shown to be effective only in very specific topologies \cite{Ali}\cite{Jackson}. It is still extremely difficult, if not impossible, to design efficient forgiving schemes in distributed networks with arbitrary topologies since agents will have difficulty in conditioning their actions on history, e.g. whether they are in the contagion phase or the forgiving phase, due to the asymmetric and incomplete knowledge.

Rating/reputation mechanisms are proposed as another promising solution to implement social reciprocation. Much of the existing work on reputation mechanism is concerned with practical implementation details such as effective information gathering techniques \cite{Kamvar} or determining the impact of reputation on a seller's prices and sales \cite{Ba}\cite{Resnick}. The few works providing theoretical results on rating protocol design consider either one (or a few) long-lived agent(s) interacting with many short-lived agents \cite{Dellarocas}-\cite{Zacharia} or anonymous, homogeneous and unconnected agents selected to interact with each other using random matching \cite{Kandori}\cite{Zhang}\cite{Xu}. Importantly, few of the prior works consider the design of such rating protocols for networks where agents interact over an underlying topology which leads to extremely complexly-coupled interactions among agents. Moreover, the distributed nature of the considered information sharing networks imposes unique challenges for the rating protocol design and implementation which are not addressed in prior works \cite{Zhang}\cite{Xu}.

In Table 1, we compare the current paper with existing works on social learning and incentive schemes based on direct reciprocation and social reciprocation.

\section{System Model}
We consider a network of $N$ agents, indexed by $\{1,2,...,N\} = {\mathcal N}$. Agents are connected subject to an underlying topology $G=\{ g_{ij} \} _{i,j\in {\rm {\mathcal N}}} $ with $g_{ij} =g_{ji} =1$ (here we consider undirected connection) representing agent $i$ and $j$ being connected (e.g. there is a communication channel between them) and $g_{ij} =g_{ji} =0$ otherwise. Moreover, we set $g_{ii} =0$. We say that agent $i$ and agent $j$ are neighbors if they are connected. For now we assume a fixed topology $G$ but certain types of time-varying topologies are allowed in our framework and this will be discussed in detail in Section VII.

Time is divided into discrete periods. In each time period, each agent $i$ decides an information sharing action with respect to each of its neighbors $j$, denoted by $a_{ij} \in [0,1]$. For example, $a_{ij} $ can represent the information sharing effort by agent $i$ with agent $j$. We collect the actions of agent $i$ with respect to all its neighbors in the notation $\a_{i} =\{ a_{ij} \} _{j:g_{ij} =1} $. Denote $\a=(\a_{1} ,...,\a_{N} )$ as the action profile of all agents and $\a_{-i} =(\a_{1} ,...,\a_{i-1} ,\a_{i+1} ,...,\a_{N} )$ as the action profile of agents except $i$. Let ${\rm {\mathcal A}}_{i} =[0,1]^{m_{i} } $ be the action space of agent $i$ where $m_{i} =\sum _{j}g_{ij}  $. Let ${\rm {\mathcal A}}=\times _{i\in {\rm {\mathcal N}}} {\rm {\mathcal A}}_{i} $ be the action space of all agents.

Agents obtain benefits from neighbors' sharing actions. We denote the actions of agent $i$'s neighbors with respect to agent $i$ by $\hat{\a}_{i} {\rm =}\{ a_{ji} \} _{j:g_{ij} =1} $ and let $b_{i} (\hat{\a}_{i} )$ be the benefit that agent $i$ obtains from its neighbors \footnote{ In principle, an agent can obtain benefits from the information sharing over indirect links relayed by its neighbor. In this case, the action will also include the relaying action. }. Sharing information is costly and the cost $c_{i} (\a_{i} )$ depends on an agent $i$'s  own actions $\a_{i}$. Hence, given the action profile $\a$ of all agents, the utility of agent $i$ is
\begin{equation} \label{ZEqnNum432082}
u_{i} (\a)=b_{i} (\hat{\a}_{i} )-c_{i} (\a_{i} )
\end{equation}

We impose some constraints on the benefit and cost functions.

\textit{Assumption}: (1) For each $i$,  the benefit $b_{i} (\hat{\a}_{i} )$ is non-decreasing in each $a_{ij} ,\forall j:g_{ij} =1$ and is concave in $\hat{\a}_{i} $ (in other words, jointly concave in $a_{ji} ,\forall j:g_{ij} =1$). (2) For each $i$, the cost is linear in its sum actions, i.e. $c_{i} (\a_{i} )=\|\a_{i}\|_1 =\sum _{j:g_{ij} } a_{ij} $.

The above assumption states that (1) agents receive decreasing marginal benefits of information acquisition, which captures the fact that agents become more or less ``satiated'' when they possess sufficient information, in the sense that additional information would only generate little additional payoff; (2) the cost incurred by an agent is equal (or proportional) to the sum effort of collaboration with all its neighbors.

\subsection{Example: Cooperative Estimation}
We illustrate the generality of our formalism by showing how well-studied cooperative estimation problems \cite{Mishra}\cite{Unnikrishnan} can be cast into it. Consider that each agent observes in each period a noisy version of a time-varying underlying system parameter $s(t)$ of interest. Denote the observation of agent $i$ by $o_{i} (t)$. We assume that $o_{i} (t)=s(t)+ \eps_i(t)$, where the observation error $\eps_{i} (t)$ is i.i.d. Gaussian across agents and time with mean zero and variance $r^{2} $. Agents can exchange observations with their neighbors to obtain better estimations of the system parameter. Let $a_{ij} (t)$ be the transmission power spent by agent $i$. The higher the transmission power the larger probability that agent $j$ receives this additional observation from agent $i$. Agents can use various combination rules \cite{Chen} to obtain the final estimations. The expected mean square error (MSE) of agent $i$'s final estimation will depend on the actions of its neighbors, denoted by $MSE_{i} (\hat{\a}_{i} (t))$. If we define the MSE improvement as the benefit of agents, i.e. $b_{i} (\hat{\a}_{i} (t))=r^{2} -MSE(\hat{\a}_{i} (t))$, then the utility of agent $i$ in period $t$ given the received benefit and its incurred cost is $u_{i} (\a(t))=r^{2} -MSE_{i} (\hat{\a}_{i} (t))-\a_{i} (t)$.

\subsection{Obedient Agents -- Benchmark}
Even though this paper focuses on strategic agents in information sharing networks, it is useful to first study how obedient agents (i.e. non-strategic agents who follow any prescribed strategy) interact in order to obtain a better understanding of the interactions and the achievable performance. The objective of the protocol designer in this benchmark case is to maximize the social welfare of the network, which is defined as the time-average sum utility of all agents, i.e.
\begin{equation} \label{ZEqnNum110798}
V=\mathop{\lim }\limits_{T\to \infty } \frac{1}{T} \sum _{t=0}^{\infty }\sum _{i}u_{i}   (\a(t))
\end{equation}
where $\a(t)$ is the action profile in period $t$. If agents are obedient, then the system designer can assign socially optimal actions, denoted by $\a^{opt} (t),\forall t$, to agents and then agents will simply take the actions prescribed by the system designer. Determining the socially optimal actions involves solving the following utility maximization problem \cite{Palomar}:
\begin{equation} \label{ZEqnNum918213}
\begin{array}{l} {\mathop{{\rm maximize}}\limits_{a} {\rm \; \; \; \; \; \; \; \; }V} \\ {{\rm subject\; to\; \; \; \; \; \; \; \; }a_{ij} (t)\in [0,1],\forall i,j:g_{ij} =1,\forall t} \end{array}
\end{equation}
This problem can be easily solved and any action profile $\a^{opt}$ that satisfies
\begin{equation} \label{ZEqnNum389088}
\hat{\a}_{i}^{opt} (t)\in \arg \max _{\hat{\a}} b_{i} (\hat{\a}_{i} (t))-\hat{\a}_{i} (t)
\end{equation}
is its solution. We denote the optimal social welfare by $V^{opt} $.

In a distributed network, there is no central planner that knows everything about the network (including the network size, topology and individual agents' utility functions) and can communicate to all agents. However, the structure of problem \eqref{ZEqnNum918213} lends itself to a fully decentralized implementation \cite{Rockafellar}: each SNI can compute the optimal actions for its neighbors by solving \eqref{ZEqnNum389088} and sending the solution to their neighboring SNIs. In this way, if all agents take the actions solved by the SNIs, the social welfare is maximized.

It is helpful to give an illustrative example of the optimal information sharing actions for agents connected using different topologies. We will revisit this example when we study strategic agents and show how incentives design and information sharing strategies are affected by the topologies.

\textit{Example}: (Ring and Star topologies) We consider a set of 4 agents performing cooperative estimation (as in Section III. A) over two fixed topologies -- a ring and a star. A possible approximation of the utility function of each agent $i$ when the uniform combination rule is used is $u_{i} (\a(t))=r^{2} -\frac{r^{2} }{1+\sum _{j:g_{ij} } a_{ji} } -\sum _{j:g_{ij} } a_{ij} $. We assume that the noise variance $r^{2} =4$. Figure \ref{ring-star1} illustrates the optimal actions in different topologies by solving \eqref{ZEqnNum918213}. In both topologies, the optimal social welfare is $V^{opt} =4$.

\begin{figure}
\centerline{\includegraphics[scale = 0.7]{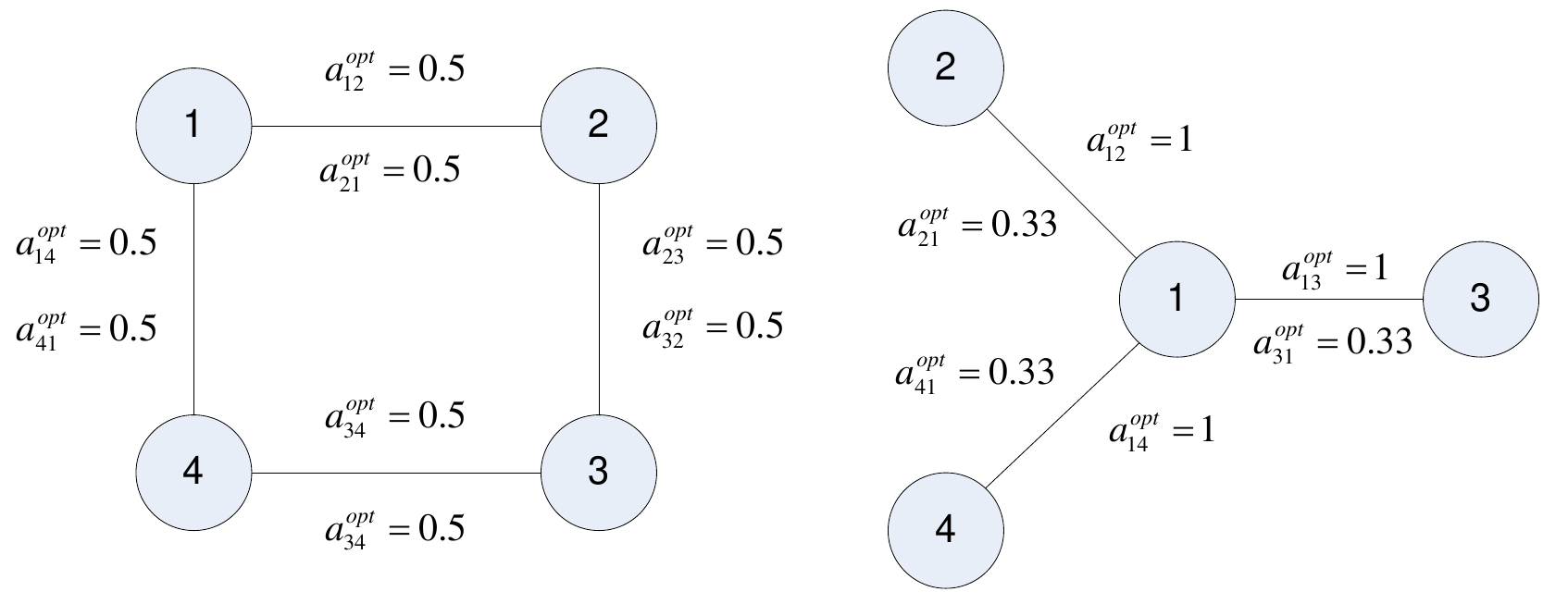}}
\caption{Optimal strategies for obedient agents interacting over a ring and a star.}\label{ring-star1}
\end{figure}

\subsection{Strategic Agents}

The information sharing problem becomes much more difficult in the presence of strategic agents: strategic agents may not want to take the prescribed actions because they do not maximize their own utilities. We formally define the network information sharing games below.

\textit{Definition 1}: A (one-shot) \textit{network information sharing game} is a tuple ${\rm {\mathcal G}}=\left\langle {\rm {\mathcal N}},{\rm {\mathcal A}},\{ u_{i} (\cdot )\} _{i\in {\rm {\mathcal N}}} ;G\right\rangle $ where ${\rm {\mathcal N}}$ is the set of players, ${\rm {\mathcal A}}$ is the action space of all players, $u_{i} (\cdot )$ is the utility function of player $i$ (defined by \eqref{ZEqnNum432082}) and $G$ is the underlying topology.

\begin{theorem}
There exists a unique Nash equilibrium (NE) $\a^{NE} =0$ in the network information sharing game in any period.
\end{theorem}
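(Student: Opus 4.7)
The plan is to exploit the specific structure of the utility function \eqref{ZEqnNum432082} together with the Assumption on benefits and costs. The key observation is that agent $i$'s own action vector $\a_i$ enters $u_i(\a)$ \emph{only} through the cost term, because the benefit $b_i(\hat{\a}_i)$ is a function solely of the actions $\{a_{ji}\}_{j:g_{ij}=1}$ that $i$'s neighbors direct \emph{toward} $i$. Thus, fixing any $\a_{-i}$, agent $i$'s utility reduces to $b_i(\hat{\a}_i) - \sum_{j:g_{ij}=1} a_{ij}$, where the first term is a constant from $i$'s perspective.

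First, I would formally isolate this decoupling by writing $u_i(\a_i, \a_{-i}) = b_i(\hat{\a}_i) - \sum_{j:g_{ij}=1} a_{ij}$ and noting that $\hat{\a}_i$ is entirely determined by $\a_{-i}$. Second, I would argue that for every neighbor $j$ of $i$, the partial derivative of $u_i$ with respect to $a_{ij}$ equals $-1 < 0$, so $u_i$ is strictly decreasing in each of its own action coordinates. Consequently, the unique best response of agent $i$ to any profile $\a_{-i}$ is $\a_i^{BR} = 0$; that is, playing zero is a strictly dominant strategy for every agent.

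Third, I would invoke the standard consequence of strict dominance: the unique strategy profile that survives iterated deletion of strictly dominated actions is $\a^{NE} = 0$, and this profile is therefore the unique Nash equilibrium of $\mathcal{G}$. For existence, it suffices to check that at $\a = 0$ no agent can strictly improve by unilateral deviation, which is immediate since any deviation $a_{ij} > 0$ strictly increases $i$'s cost while leaving $b_i(\hat{\a}_i) = b_i(0)$ unchanged.

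I do not anticipate a real obstacle here: the result is essentially a direct consequence of the fact that the benefit function depends only on incoming actions while the cost depends only on outgoing actions, so incentives to share vanish in a one-shot game regardless of the topology $G$. The only subtlety worth flagging explicitly is that the Assumption only requires $b_i$ to be non-decreasing (not strictly increasing), but this is irrelevant because uniqueness follows from the strict monotonicity of $-c_i$ in each $a_{ij}$, which already yields strict dominance on its own.
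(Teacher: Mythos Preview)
Your proposal is correct and follows essentially the same approach as the paper: both arguments observe that $\a_i = 0$ is a (strictly) dominant strategy for every agent because $b_i(\hat{\a}_i)$ is independent of $\a_i$ while $c_i(\a_i)$ is strictly increasing in each coordinate, and then conclude that the unique NE is $\a^{NE} = 0$. Your write-up is simply a more detailed version of the paper's two-line proof.
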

\begin{proof}
Consider the utility of an agent $i$ in \eqref{ZEqnNum432082}, the dominant
strategy of agent $i$ is $\a_{i} =0$ regardless of other agents' actions $\a_{-i} $. Therefore, the only NE is $\a_{i} =0,\forall i$.
\end{proof}

We now proceed to show how to build incentives for agents to share information with each other by exploiting their repeated interactions. In the repeated game, the (one-shot) network information sharing game is played in every period $t=0,1,2,...$. Let $y_{i}^{t} \in Y$ be the public monitoring signal related to agent $i$'s actions $\a_{i} (t)$ at time $t$. A public history of length $t$ is a sequence of public signals $(y^{0} ,y^{1} ,...,y^{t-1} )\in Y^{t} $. We note that in the considered network setting, public signals are ``locally public'' in the sense that agents only observe the public signals within their own neighborhood but not all public signals. For example, a public signal $y_{i}^{t} $ can indicate whether or not agent $i$ followed the strategy at time $t$ and only the neighbors of agent $i$ observe it.  We write ${\rm {\mathcal H}}(t)$ for the set of public histories of length $t$, ${\rm {\mathcal H}}^{T} =\bigcup _{t=0}^{T} {\rm {\mathcal H}}(t)$ for the set of public histories of length at most $T$ and ${\rm {\mathcal H}}=\bigcup _{t=0}^{\infty } {\rm {\mathcal H}}(t)$ for the set of all public histories of all finite lengths. A public strategy of agent $i$ is a mapping from public histories (in fact, only those public signals $\{ y_{j}^{t} {\rm \} }_{j:g_{ij} =1} $ that agent $i$ can observe) to $i$'s pure actions ${\it \bm\sigma }_{i} :{\rm {\mathcal H}}\to {\rm {\mathcal A}}_{i} $. We write ${\bm\sigma }$ as the collection of public strategies for all agents. Let $\delta \in (0,1]$ be the discount factor of agents. Since interactions are on-going, agents care about their long-term utilities. The long-term utility  for an agent $i$ is defined as follows:
\begin{equation} \label{5)}
U_{i} (t)=u_{i} (\a(t))+\delta u_{i} (\a(t+1))+\delta ^{2} u_{i} (\a(t+2))+...
\end{equation}
A public strategy profile ${\bm\sigma }$ induces a probability distribution over public histories and hence over\textit{ ex ante }utilities. We abuse notation and write $U_{i} ({\bm \sigma };h)$ for the expected long-run average \textit{ex ante} utility of agent $i$ when agents follow the strategy profile ${\bm \sigma }$ after the public history $h\in {\rm {\mathcal H}}$.

\textit{Definition 1}: (Perfect Public Equilibrium) A strategy profile ${\bm \sigma }$ is a perfect public equilibrium  if $\forall h\in {\mathcal H}$,$\forall i$, $U_{i} ({\it \sigma }_{i} ,{\it \sigma }_{-i} ;h)\ge U_{i} ({\it \sigma }_{i} ',{\it \sigma }_{-i} ;h),\forall {\it \sigma }_{i} '\ne {\it \sigma }_{i} $.

In the above formulation, we restrict agents to use public strategies and assume that agents make no use of any information other than provided by the (local) public signal (See Figure \ref{localpublicsignal}); in particular, agents make no use of their private history (i.e. the history sequence of its own actions $\a_{i} (t)$, its own utilities $u_{i} (t)$ and its neighbors' action toward it $\hat{\a}_{i} (t)$). This assumption admits a number of possible interpretations \cite{Mailath}, each of which is appropriate in some circumstances. In the considered scenarios where agents interact over a topology, the most important reason why we consider the design of public strategies and PPE is due to agents' partial observations and asymmetric knowledge of the network. In particular, since agent  $i$ only observes its own neighborhood subject to the underlying topology, it cannot distinguish based solely on its private history between the case in which its neighbor is deviating from the recommended strategy and the case in which its neighbor is following the recommended strategy and punishing its own neighbors' deviation actions. Using (local) public histories is more practical in the considered scenarios since it allows agents to have common knowledge within each neighborhood. The proposed rating protocols go one step further in reducing the implementation complexity by associating each agent with a rating that summarizes the public history of that agent. In this way, the space of public histories is reduced to a finite set and hence, much simpler strategies can be constructed which can still achieve the optimal social welfare.

\begin{figure}
\centerline{\includegraphics[scale = 0.8]{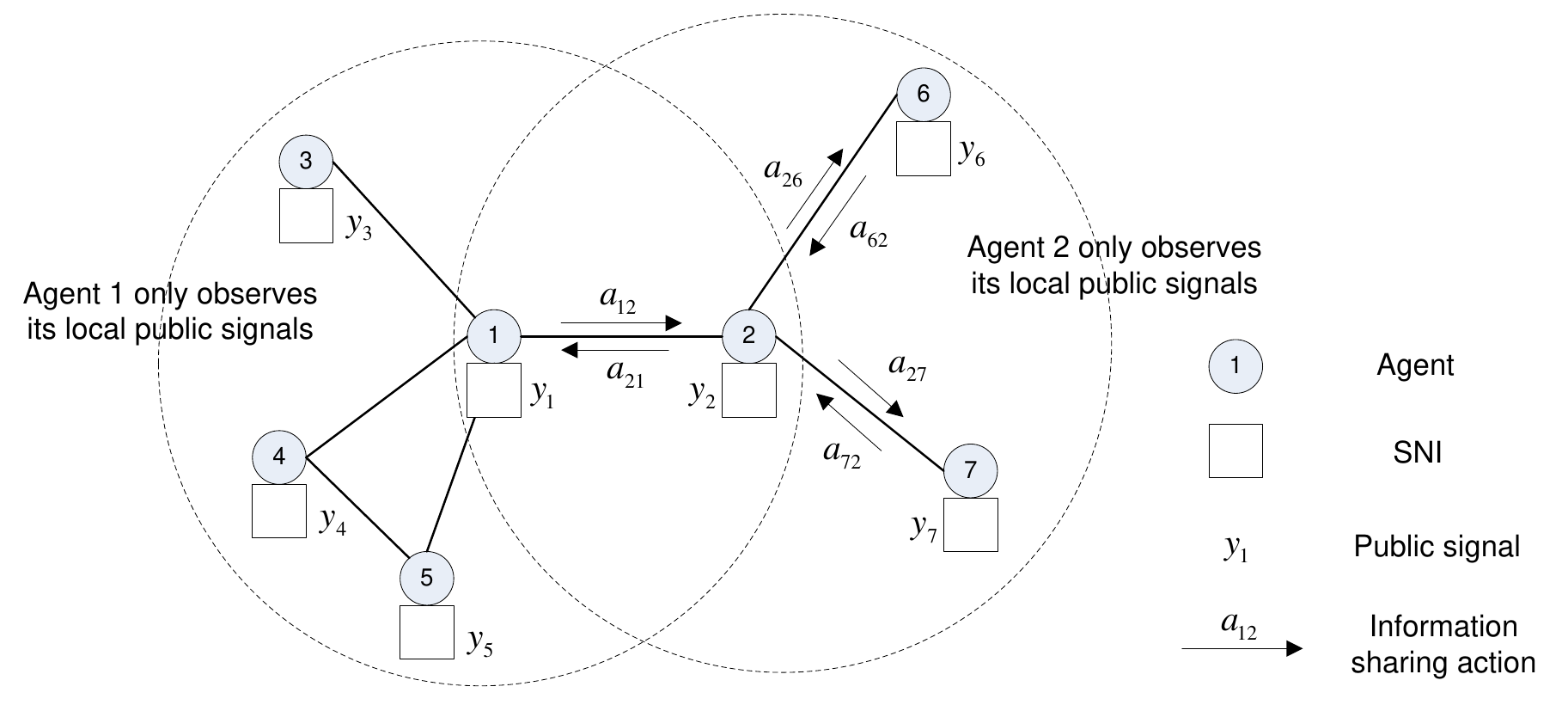}}
\caption{Illustration of local public signals. (Agents observe only public signals generated by the SNIs in their neighborhood)}\label{localpublicsignal}
\vspace{-5pt}
\end{figure}

\section{Proposed Rating Protocols}
In this section, we describe the proposed distributed rating protocol and its operation in a distributed network. As mentioned in the Introduction, each agent is equipped with an SNI. These SNIs are non-strategic software/hardware components available to the agents and will assist in the distributed design and implementation of the rating protocol. Importantly though, note that the agents \textit{are strategic} in choosing the information sharing actions (i.e. they will selfishly decide whether or not to follow the strategy recommended by the SNIs) such that their own utilities are maximized.

\subsection{Considered Rating Protocol}
A rating protocol, which is designed and implemented by the SNIs, consists of three components -- a set of ratings, a set of recommended strategies to agents, and a rating update rule.

\begin{enumerate}
\item  We consider a set of $K$ ordered ratings $\Theta =\{ 1,2,...,K\} $ with $1$ being the lowest and $K$ being the highest rating. Denote agent $i$'s rating in period $t$ by $\theta _{i} (t)\in \Theta $ and agent $i$'s neighbors' ratings by $\hat{{\bm \theta }}_{i} =\{ \theta _{j} \} _{j:g_{ij} =1} $. $K$ serves as an upper bound of the rating set size and is predetermined before the system operates.

\item  The SNIs determine the recommended (public) strategy in a distributed manner and recommend actions to their own agent depending on neighbors' ratings $\bm\sigma :{\rm {\mathcal N}}\times {\rm {\mathcal N}}\times \Theta \to [0,1]$, where $\sigma _{ij} (\theta _{j} )$ represents the recommended sharing action of agent $i$ with respect to agent $j$ if agent $j$'s rating is $\theta _{j} $. Since it is reasonable that high-rated agents should be rewarded while low-rated agents should be punished, the recommended strategy should satisfy that $\sigma _{ij} (\theta )\le \sigma _{ij} (\theta ')$ if $\theta <\theta '$. We collect the strategies of agent $i$ to all its neighbors in ${\bm \sigma }_{i} (\hat{{\bm \theta }}_{i} )=\{ \sigma _{ij} (\theta _{j} )\} _{j:g_{ij} =1} $ and the strategies of agent $i$'s neighbors to itself in $\hat{{\bm \sigma }}_{i} (\theta _{i} )=\{ \sigma _{ji} (\theta _{i} )\} _{j:g_{ij} =1} $.

\item  Depending on whether an agent $i$ followed or not the recommended strategy, the SNI of agent $i$ updates agent $i$'s rating at the end of each period.  Let $y_{i} \in Y=[0,1]$  be the monitoring signal with respect to agent $i$. Specifically, $y_{i} =1$ if $\a_{i} =\bm\sigma _{i} $ and $y_{i} =0$ if $\a_{i} \ne \bm\sigma _{i} $. The rating update rule is therefore a mapping $\tau :{\rm {\mathcal N}}\times \Theta \times Y\to \Delta (\Theta )$, where $\tau _{i} (\theta _{i}^{+} ;\theta _{i} ,y_{i} )$ is the probability that the updated rating is $\theta _{i}^{+} $ if agent $i$'s current rating is $\theta _{i} $ and the public signal is $y_{i} $. In particular, we consider the following parameterized rating update rule (see also Figure \ref{ratingupdate}), for agent $i$, if $\theta _{i} =k$,
\begin{equation} \label{6)}
\tau _{i} (\theta _{i}^{+} ;\theta _{i} ,y){\rm =}
\left\{\begin{array}{l}
{\alpha _{i,k} ,{\rm \; \; \; \;if\; \; }\theta _{i}^{+} =\max \{ 1,k-1\} ,y_{i} =0} \\ {1-\alpha _{i,k} ,{\rm \; \; \; \; if\; \; }\theta _{i}^{+} =k,y_{i} =0} \\ {\beta _{i,k} ,{\rm \; \; \; \;  if\; \; }\theta _{i}^{+} =\min \{ K,k+1\} ,y_{i} =1} \\ {1-\beta _{i,k} ,{\rm \; \; \; \; if\; \; }\theta _{i}^{+} =k,y_{i} =1}
\end{array}\right.
\end{equation}

    In words, compliant agents are rewarded to receive a higher rating with some probability while deviating agents are punished to receive a lower rating with some (other) probability. These probabilities $\alpha _{i,k} ,\beta _{i,k} $ are chosen from $[0,1]$. Note that when $\alpha _{i,k} =0$, the rating label set of agent $i$ effectively reduces to a subset $\{ k,k+1,...,K\} $ since its rating will never drop below $k$ (if its initial rating is higher than $k$). Note also that agents remain at the highest rating $\theta =K$ if they always follow the recommended strategy regardless of the choice of $\beta _{i,K} $.
\end{enumerate}

\begin{figure}
\centerline{\includegraphics[scale = 0.9]{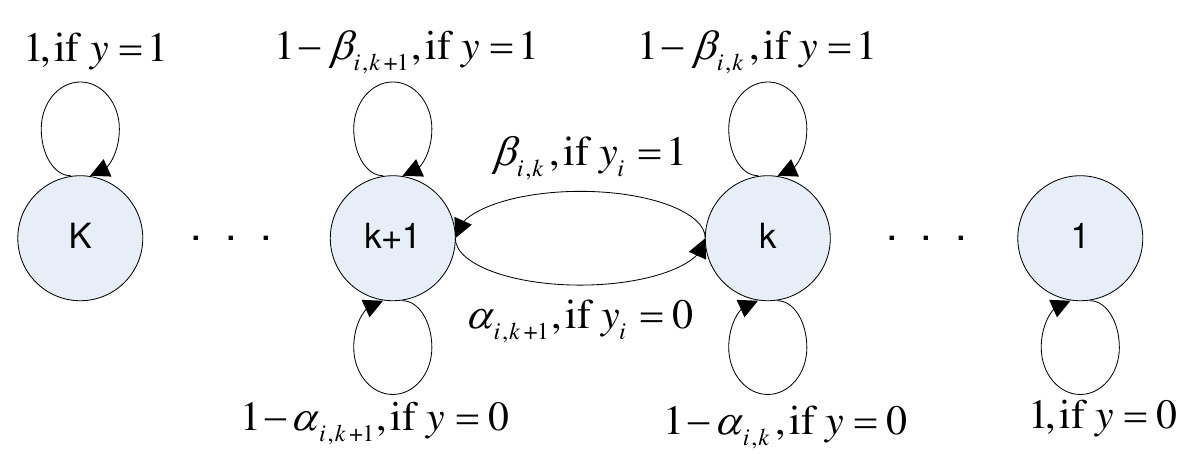}}
\caption{Rating update rule.}\label{ratingupdate}
\vspace{-15pt}
\end{figure}

Monitoring may not be perfect in implementation and hence it is possible that even if $a_{i} =\sigma _{i} $, it can still be $y_{i} =0$ (and if $a_{i} \ne \sigma _{i} $, $y_{i} =1$). If monitoring is perfect, then the strongest punishment (i.e. the agent receives the lowest rating forever once a deviation is detected) will provide the strongest incentives for agents to cooperate. However, in the imperfect monitoring environment, such punishment will lead to the network collapse where no agents share information with others. Hence, when designing the rating update rule, the monitoring errors should also be taken into account.

To sum up, the rating protocol is uniquely determined by the recommended (public) strategies ${\bm \sigma }_{i} (\hat{{\bm \theta }}_{i} ),\forall i,\forall \hat{{\bm \theta }}_{i}$ and the rating update probabilities $\alpha _{i,k} ,\beta _{i,k}, \forall i,\forall k$. We denote the rating protocol by $\pi =(\Theta ,{\bm \sigma },{\bm \alpha },{\bm \beta })$. Different rating protocols lead to different social welfare. Denote the achievable social welfare by adopting the rating protocol by $V(\pi )$. The rating protocol design problem thus is
\begin{equation}
\begin{array}{l}
{\mathop{{\rm maximize}}\limits_{\pi =(\Theta ,\bm\sigma, \bm\alpha, \bm\beta )} {\rm \; \; \; \; \; \; \; \; }V(\pi )} \\ {{\rm \;\;subject\; to\; \; \; \; \; \; \; \; \;\;}\bm\sigma {\rm \; constitutes\; a\; PPE}} \end{array}\label{ZEqnNum479148}
\end{equation}

\subsection{Operation of the Rating Protocol}
The operation of the rating protocol comprises two phases: the design phase and the implementation phase. In the design phase, the SNIs determine in a distributed way the recommended strategy and rating update rules according to the network topology, and the agents do nothing except being informed of the instantiated rating protocol. In the implementation phase (run-time), the agents (freely and selfishly) choose their actions in each information sharing period in order to maximize their own utilities (i.e. they can freely decide whether to follow or not the recommended strategies). Depending on whether the agents are following or deviating from the recommended strategy, each SNI executes the rating update of its agent and sends the new ratings of its agent to the neighboring SNIs. Note that if the rating protocol constitutes a PPE, then the agents will follow the recommended strategy in any period. When the network topology is static, the rating protocol goes through the design phase only once, when the network becomes operational, and then enters the implementation phase. When the network topology is dynamic, the rating protocol re-enters the design phase periodically, to adapt to the varying topology. However, both the design and implementation have to be carried out in a distributed way in the informationally decentralized environment. Table 2 summarizes the available information and actions of the agents and SNIs in both the design and implementation phases.

\begin{table*}[t]
\centerline{\includegraphics[scale = 0.9]{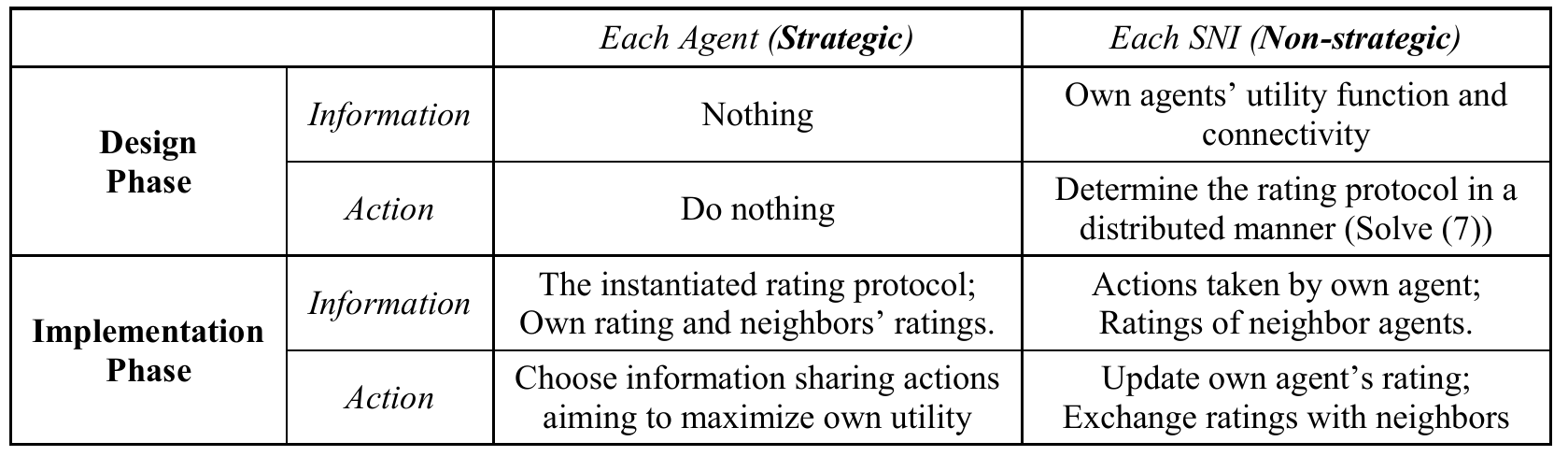}}
\caption{Operation of the rating protocol. }
\label{com_existing}
\vspace{-20pt}
\end{table*}

\section{Distributed Optimal Rating Protocol Design}
If a rating protocol constitutes a PPE, then all agents will find it in their self-interest to follow the recommended strategies. If the rating update rule updates compliant agents' to a higher rating with positive probabilities, then eventually all agents will have the highest ratings forever (assuming no update errors). Therefore, the social welfare, which is the time-average sum utilities, is asymptotically the same as the sum utilities of all agents when they have the highest ratings and follow the recommended strategy, i.e.
\begin{equation} \label{ZEqnNum559005}
V=\sum _{i}(b_{i} (\hat{{\bm \sigma }}_{i} (K))-{\bm \sigma }_{i} (\K))
\end{equation}
This means that the recommended strategies for the highest ratings determine the social welfare that can be achieved by the rating protocol. If these strategies can be arbitrarily chosen, then we can solve a similar problem as \eqref{ZEqnNum918213} for the obedient agent case. However, in the presence of self-interested agents, these strategies, together with the other components of a rating protocol, need to satisfy the equilibrium constraint such that self-interested agents have incentives to follow the recommended strategies. In Theorem 2, we identify a sufficient and necessary condition on ${\bm \sigma }(\K)$ (i.e. the recommended strategies when agents have the highest ratings) such that an equilibrium rating protocol can be constructed. With this, the SNIs are able to determine the optimal rating protocol in a distributed way in order to maximize the social welfare. We denote the social welfare that can be achieved by the optimal rating protocol as $V^{*} $ and use \textit{the price of anarchy} (PoA)\footnote{ We can also use the price of stability (PoS) as the performance measure. However, since there is a unique equilibrium given the specific rating protocol, these two measures are equivalent. }, defined as $PoA=V^{opt} /V^{*} $, as the performance measure of the rating protocol.

\subsection{Sufficient and Necessary Condition}
To see whether a rating protocol can constitute a PPE, it suffices to check whether agents can improve their long-term utilities by one-shot unilateral deviation from the recommended strategy after any history (according to the one-shot deviation principle in repeated game theory \cite{Mailath}). Since in the rating protocol, the history is summarized by the ratings, this reduces to checking the long-term utility in any state (any rating profile ${\bm \theta }$ of agents). Agent $i$'s long-term utility when agents choose the action profile $\a$ is
\begin{equation} \label{9)}
U_{i} ({\bm \theta },\a)=u_{i} ({\bm \theta },\a)+\delta \sum _{\bm\theta '} p({\bm \theta }'|{\bm \theta },\a)U_{i}^{*} ({\bm \theta }'),
\end{equation}
where $p({\bm \theta }'|{\bm \theta },\a)$ is the rating profile transition probability which can be fully determined by the rating update rule based on agents' actions and $U_{i}^{*} ({\bm \theta }')$ is the optimal value of agent $i$ at the rating profile ${\bm \theta }'$, i.e. $U_{i}^{*} ({\bm \theta }') = \max\limits_{\a_{i} } U_{i} ({\bm \theta },\a)$. PPE requires that the recommended actions for any rating profile are the optimal actions that maximize agents' long-term utilities. Before we proceed to the proof of Theorem 2, we prove the following Lemma, whose proof is deferred to online appendix \cite{onlineappendix} due to space limitation.

\smallskip
{\bf Lemma} (1) $\forall {\bm \theta }$, the optimal action of agent $i$ is either $\a_{i}^{*} ({\bm \theta })={\bf 0}$ or $\a_{i}^{*} ({\bm \theta })={\bm \sigma }_{i} (\hat{{\bm \theta }}_{i} )$.

(2) $\forall \theta _{i} $, if for $\hat{{\bm \theta }}_{i} =\K$, $\a_{i}^{*} ({\bm \theta })={\bm \sigma }_{i} (\hat{{\bm \theta }}_{i} )$, then for any other $\hat{{\bm \theta }}_{i} $, $\a_{i}^{*} ({\bm \theta })={\bm \sigma }_{i} (\hat{{\bm \theta }}_{i} )$.

(3) Let $\hat{{\bm \theta }}_{i} =\K$, suppose $\forall \theta _{i} $, $\a_{i}^{*} ({\bm \theta })={\bm \sigma }_{i} (\hat{{\bm \theta }}_{i} )$, then $\theta _{i} <\theta '_{i} {\rm \; \; \; }\Leftrightarrow {\rm \; \; }U^*_{i} (\theta _{i} ,\hat{{\bm \theta }}_{i} )\le U^*_{i} (\theta '_{i} ,\hat{{\bm \theta }}_{i} )$
\smallskip

Lemma (1) characterizes the set of possible optimal actions. That is, self-interested agents choose to either share nothing with their agents or share the recommended amount of information with their neighbors.  Lemma (2) states that if an agent has incentives to follow the recommended strategy when all its neighbors have the highest ratings, then it will also have incentives to follow the recommended strategy in all other cases. Lemma (3) shows that the optimal long-term utility of an agent is monotonic in its ratings when all its neighbors have the highest rating -- the higher the rating the larger the long-term utility the agent obtains. With these results in hand, we are ready to present and prove Theorem 2.

\begin{theorem}
Given the rating protocol structure and the network structure (topology and individual utility functions), there exists at least one PPE (of the rating protocol) if and only if $\delta b_{i} (\hat{{\bm \sigma }}_{i} (K))\ge c_i({\bm \sigma }_{i} (\K)),\forall i$.
\end{theorem}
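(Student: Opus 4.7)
My plan is to invoke the one-shot deviation principle, use Lemmas (1)--(3) to localize the incentive check to the state where all of agent $i$'s neighbors have rating $K$, and then optimize the punishment to derive the tight condition.

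First, by the one-shot deviation principle for PPE under public monitoring, it suffices to check no-deviation at every rating profile. Lemma (1) restricts candidate one-shot deviations to the zero action, and Lemma (2) reduces the check to profiles with $\hat{\bm\theta}_i=\K$, so I only need to verify IC at states of the form $(\theta_i,\K)$ for each $\theta_i$. Since rating $K$ is absorbing under compliance (the update at the top rating leaves the rating in place upon $y_i=1$), the compliance continuation value at $(K,\K)$ simplifies to $U_i^*(K,\K)=\frac{b_i(\hat{\bm\sigma}_i(K))-c_i(\bm\sigma_i(\K))}{1-\delta}$. Deviating at $(K,\K)$ saves $c_i(\bm\sigma_i(\K))$ today and, with probability $\alpha_{i,K}$, triggers a drop to rating $K-1$ with continuation value $U_i^*(K-1,\K)$; rearranging the no-deviation inequality yields the local IC
\begin{equation*}
\delta\,\alpha_{i,K}\bigl[U_i^*(K,\K)-U_i^*(K-1,\K)\bigr]\;\ge\;c_i(\bm\sigma_i(\K)),
\end{equation*}
with the bracket non-negative by Lemma (3).

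For necessity, I would note that any agent can unilaterally guarantee at least the zero payoff by playing $\bm 0$ forever (stage benefits are non-negative), so $U_i^*(K-1,\K)\ge 0$ in any admissible protocol. Combined with $\alpha_{i,K}\le 1$, the local IC forces $\delta\,U_i^*(K,\K)\ge c_i(\bm\sigma_i(\K))$; substituting the closed form for $U_i^*(K,\K)$ and rearranging gives exactly $\delta b_i(\hat{\bm\sigma}_i(K))\ge c_i(\bm\sigma_i(\K))$. Hence if this fails for any $i$, no admissible completion of the protocol can make $\bm\sigma$ a PPE.

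For sufficiency, I would exhibit a concrete two-rating completion that works under the hypothesis: take $K=2$, $\alpha_{i,2}=1$, $\beta_{i,1}=1$, $\bm\sigma_i(\hat{\bm\theta}_i)=\bm 0$ whenever $\hat{\bm\theta}_i\ne\K$, and leave $\bm\sigma_i(\K)$ as given. At states with $\hat{\bm\theta}_i\ne\K$ the recommendation $\bm 0$ is stage-dominant and hence trivially IC; at $(1,\K)$ following forgoes $c_i$ today but returns the agent to rating $K$ in one step, so compliance is optimal precisely when $\delta b_i\ge c_i$; and at $(K,\K)$ a short computation of the two continuation values $U_i^*(K,\K)$ and $U_i^*(K-1,\K)$ reduces the IC to the same inequality. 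The main obstacle I anticipate is the subtlety that $\bm\sigma_i$ depends only on neighbors' ratings, not on $i$'s own rating, so at $(K-1,\K)$ the agent is still recommended to play the costly action $\bm\sigma_i(\K)$ rather than $\bm 0$; the construction handles this by choosing $\beta_{i,1}=1$ so that the one-step climb-back reward exactly offsets the one-period loss precisely at the boundary $\delta b_i=c_i$.
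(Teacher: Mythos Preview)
Your approach matches the paper's: both directions reduce via Lemmas~(1)--(3) to states $(\theta_i,\K)$, and your sufficiency construction is the same binary protocol with $\alpha_{i,K}=\beta_{i,K-1}=1$ and zero sharing toward the low rating that the paper builds in the Appendix. Your necessity argument is in fact slightly more explicit than the paper's, since you state the minmax bound $U_i^*(K-1,\K)\ge 0$ that the paper uses silently when it replaces the deviation continuation by zero.

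There is one small slip in your sufficiency construction. Within the protocol class, $\sigma_{ij}$ is a function of $\theta_j$ alone, so you cannot set ``$\bm\sigma_i(\hat{\bm\theta}_i)=\bm 0$ whenever $\hat{\bm\theta}_i\ne\K$'': if one neighbor $j$ has $\theta_j=K$ while another neighbor $k$ has $\theta_k=1$, agent $i$ must still play $\sigma_{ij}(K)>0$ toward $j$. The correct specification is $\sigma_{ij}(1)=0$ for all $i,j$ (exactly what the paper uses in Proposition~1). Consequently your sentence ``at states with $\hat{\bm\theta}_i\ne\K$ the recommendation $\bm 0$ is stage-dominant and hence trivially IC'' is false --- the recommendation there is not $\bm 0$. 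This is harmless, however, because Lemma~(2) already disposes of those states once IC is verified at $(\theta_i,\K)$, which you do correctly for both $\theta_i=K$ and $\theta_i=K-1$.
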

\begin{proof}
See Appendix.
\end{proof}

\subsection{Computing the Recommended Strategy}
Theorem 2 provides a sufficient and necessary condition for the existence of a PPE with respect to the recommended strategies when agents have the highest ratings. From \eqref{ZEqnNum559005} we already know that these strategies fully determine the social welfare that can be achieved by the rating protocol. Therefore, the optimal values of ${\bm \sigma }(\K)$ can be determined by solving the following \textit{optimal recommended strategy design} problem:
\begin{equation} \label{ZEqnNum960030}
\begin{array}{l} {\mathop{{\rm maximize}}\limits_{{\bm \sigma }} {\rm \; \; \; \; \; \; \; \; }\sum _{i}(b_{i} (\hat{{\bm \sigma }}_{i} (K))-c_i({\bm \sigma }_{i} (\K))) } \\ {{\rm subject\; to\; \; \; \; \; \; \; \; }c_i({\bm \sigma }_{i} (\K))\le \delta b_{i} (\hat{{\bm \sigma }}_{i} (K)),\forall i} \end{array}
\end{equation}
where the constraint ensures that an equilibrium rating protocol can be constructed. Note that this problem implicitly depends on the network topology since both $\hat{{\bm \sigma }}_{i} (K)$ and ${\bm \sigma }_{i} (\K),\forall i$ are topology-dependent (since for each agent $i$, the strategy is only with respect to its neighbors). In this subsection, we will write ${\bm \sigma }_{i} (\K)$ as ${\bm \sigma }_{i} $ and $\hat{{\bm \sigma }}_{i} (K)$ as $\hat{{\bm \sigma }}_{i} $ to keep the notation simple.

Now, we propose a distributed algorithm to compute these recommended strategies using dual decomposition and Lagrangian relaxation. The Optimal Recommended Strategy Design problem \eqref{ZEqnNum960030} is decomposed into $N$ sub-problems each of which is solved locally by the SNIs. Note that unlike the case with obedient agents, these sub-problems have coupled constraints. Therefore, SNIs will need to go through an iterative process to exchange messages (the Lagrangians) with their neighboring SNIs such that their local solutions converge to the global optimal solution. We perform dual decomposition on \eqref{ZEqnNum960030} and relax the constraints as follows
\begin{equation} \label{ZEqnNum321261}
\mathop{{\rm maximize}}\limits_{{\bm \sigma }} {\rm \;  \; \; \; }\sum _{i}(b_{i} (\hat{{\bm \sigma }}_{i} )-\|{\bm \sigma }_{i}\| ) -\sum _{i}\lambda _{i}  (\|{\bm \sigma }_{i}\| -\delta b_{i} (\hat{{\bm \sigma }}_{i} ))
\end{equation}
where $\lambda _{i} \ge 0,\forall i$ are the Lagrangian multiplexers. The optimization thus separates into two levels of optimization. At the lower level, we have the sub-problems (one for each agent), $\forall i$
\begin{equation} \label{ZEqnNum852632}
\mathop{{\rm maximize}}\limits_{\hat{{\bm \sigma }}_{i} } {\rm \; \; \; \; \; \; \; \; }(1+\lambda _{i} \delta )b_{i} (\hat{{\bm \sigma }}_{i} )-\sum _{j:g_{ij} =1} (1+\lambda _{j} )\sigma _{ji}
\end{equation}
It is easy to see that the optimal solution of these subproblems is also the optimal solution of the relaxed problem \eqref{ZEqnNum321261}. At the higher level, the master dual problem is in charge of updating the dual variables,
\begin{equation} \label{13)}
\begin{array}{l} {\mathop{{\rm minimize}}\limits_{{\bm \lambda }} {\rm \; \; \; \; \; \; \; \; }g({\bm \lambda })=\sum _{i}g_{i}  ({\bm \lambda })} \\ {{\rm subject\; to\; \; \; \; \; \; \; \; }\lambda _{i} \ge 0,\forall i} \end{array}
\end{equation}
where $g_{i} ({\bm \lambda })$ is the maximum value of the Lagrangian \eqref{ZEqnNum852632} given ${\bm \lambda }$ and $g({\bm \lambda })$ is the maximum value of the Lagrangian \eqref{ZEqnNum321261} of the primal problem. The following subgradient method is used to update ${\bm \lambda }$,
\begin{equation} \label{ZEqnNum330405}
\lambda _{i} (q+1)=\left[\lambda _{i} (q)+w({\bm \sigma }_{i} -\delta b_{i} (\hat{{\bm \sigma }}_{i} ))\right]^{+} ,\forall i
\end{equation}
where $q$ is the iteration index, $w>0$ is a sufficiently small positive step-size. Because \eqref{ZEqnNum960030} is a convex optimization, such an iterative algorithm will converge \cite{Boyd} to the dual optimal ${\bm \lambda }^{*} $ as $q\to \infty $ and the primal variable ${\bm \sigma }^{*} ({\bm \lambda }(q))$ will also converge to the primal optimal ${\bm \sigma }^{*} $.

This iterative process can be made fully distributed which requires only limited message exchange between neighboring SNIs. We present the Distributed Computation of the Recommended Strategy (DCRS) Algorithm below which is run locally by each SNI of the agents.

\bigskip
\noindent
\begin{tabular}{p{6in}} \hline
\textbf{Algorithm}: Distributed Computation of the Recommended Strategy (DCRS)  \\ \hline
(Run by SNI of agent $i$)\textit{\newline Input}: Connectivity and utility function of agent $i$.\newline \textit{Output}:  ${\bm \sigma }_{i} (\K)=\{ \sigma _{ij} (K)\} _{j:g_{ij} =1} $ (denoted by ${\bm \sigma }_{i} =\{ \sigma _{ij} \} _{j:g_{ij} =1} $ for simplification) \\ \hline
\textbf{Initialization}:, $q=0$; $\lambda _{i} (q)=0$\newline \textbf{Repeat}:\newline Send $\lambda _{i} (q)$ to neighbor $j$, $\forall j:g_{ij} =1$. $~~~$(Obtain $\lambda _{j} (q)$ from $j$, $\forall j$)\newline Solve \eqref{ZEqnNum852632} using $\lambda _{i} (q)$, $\{\lambda _{j} (q)\}_{j:g_{ij} =1}$ to obtain $\hat{{\bm \sigma }}_{i} ({\bm \lambda }(q))$.\newline Send $\sigma _{ji} ({\bm \lambda }(q))$ to neighbor $j$, $\forall j:g_{ij} =1$.  $~~~$(Obtain $\sigma _{ij} ({\bm \lambda }(q))$ from $j$, $\forall j$)\newline Update $\lambda _{i} (q+1)$  according to \eqref{ZEqnNum330405}.\newline \textbf{Stop} until $\|\lambda _{ji} (q+1)-\lambda _{ji} (q)\|_2 <\varepsilon _{\lambda } $ \\ \hline
\end{tabular}
\bigskip

The above DCRS algorithm has the following interpretation. In each period, each SNI computes the information sharing actions of its neighbors that maximize the social surplus with respect to its own agent (i.e. the benefit obtained by its own agent minus the cost incurred by its neighbors). However, this computation has to take into account whether neighboring agents' incentive constraints are satisfied which are reflected by the Lagrangian multipliers. The larger $\lambda _{i} $ is, the more likely is that agent $i$'s incentive is being violated. Hence, the neighbors of agent $i$ should acquire less information from it. We note that the DCRS algorithm needs to be run to compute the optimal strategy only once in the static topology case or once in a while in the dynamic topology case.

\subsection{Computing the Remaining Components of the Rating Protocol}
Even though the DCRS algorithm provides a distributed way to compute the recommended strategy when agents have the highest ratings, the other elements of the rating protocol remain to be determined. There are many possible rating protocols that can constitute PPE given the obtained recommended strategies. In fact, we have already provided one way to compute these remaining elements when we determined the sufficient condition in Theorem 2 by using a constructional method. However, this is not the most efficient design in the imperfect monitoring scenario where ratings will occasionally drop due to monitoring errors. Therefore, the remaining components of the rating protocol should still be smartly chosen in the presence of monitoring errors. In this subsection, we consider a rating protocol with a binary rating set $\Theta =\{ 1,2\} $ and $\sigma _{ij} (\theta =1)=0,\forall i,j:g_{ij} =1$. We design the rating update probabilities $\alpha _{i,2} ,\beta _{i,1} ,\forall i$ to maximize the social welfare when monitoring error exists.

\begin{proposition}
Given a binary rating protocol $\Theta =\{ 1,2\} $, $\sigma _{ij} (2),\forall i,j:g_{ij} =1$ determined by the DCRS Algorithm and $\sigma _{ij}(1)=0,\forall i,j:g_{ij} =1$, when the monitoring error $\eps>0$, the optimal rating update probability that maximize the social welfare is, $\forall i$, $\beta _{i,1}^{*} =1,\alpha _{i,2}^{*} =\frac{\|{\bm \sigma }_{i} ({\bf 2})\|}{\delta b_{i} (\hat{{\bm \sigma }}_{i} (2))} $
\end{proposition}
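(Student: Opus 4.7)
The plan is to decompose the welfare-maximization into two single-parameter choices by first expressing the time-average welfare in terms of each agent's stationary rating probability. Assuming every agent follows the recommended strategy (as required in a PPE), each agent's own rating evolves as an independent two-state Markov chain whose transitions depend only on its own $\alpha_{i,2}$, $\beta_{i,1}$, and the monitoring error $\epsilon$. A direct Markov calculation yields
\[
p_2^i \;=\; \frac{(1-\epsilon)\beta_{i,1}}{(1-\epsilon)\beta_{i,1}+\epsilon\,\alpha_{i,2}},
\]
and substituting the given strategies $\sigma_{ij}(1)=0$ and the DCRS-determined $\sigma_{ij}(2)$ into the utilities yields, after interchanging summations over agents and their neighbors, $V=\sum_i p_2^i\bigl[b_i(\hat{\bm{\sigma}}_i(2))-\|\hat{\bm{\sigma}}_i(2)\|\bigr]$. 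Since the DCRS output is feasible for the Theorem~2 condition, each bracket is nonnegative, so $V$ is nondecreasing in every $p_2^i$.

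Because $p_2^i$ is strictly increasing in $\beta_{i,1}$ and strictly decreasing in $\alpha_{i,2}$, welfare is maximized by raising $\beta_{i,1}$ as high as possible and pushing $\alpha_{i,2}$ as low as possible. Raising $\beta_{i,1}$ to its upper bound $1$ cannot tighten any incentive constraint (a faster recovery only enlarges the continuation values, which by Lemma~(3) only widens the IC inequality), so $\beta_{i,1}^{*}=1$. The optimal $\alpha_{i,2}^{*}$ is therefore the smallest value for which the protocol still constitutes a PPE. To pin down this lower bound I would appeal to Lemma~(1)--(2), which reduces the PPE test to a single worst-case check at the state $(\theta_i=2,\hat{\bm{\theta}}_i=\K)$, where the one-shot temptation to deviate is largest. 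Writing the Bellman equations for $V_i^{f}(1)$ and $V_i^{f}(2)$ under following behavior with $\beta_{i,1}=1$, together with the analogous value $V_i^{d}(2)$ for a one-shot deviation at rating $2$, and imposing $V_i^{f}(2)\ge V_i^{d}(2)$ produces an inequality of the form $\delta\,\alpha_{i,2}\bigl[V_i^{f}(2)-V_i^{f}(1)\bigr]\ge \|\bm{\sigma}_i(\mathbf{2})\|$ after the signal-matching factors have been canceled on both sides.

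The main obstacle will be the algebraic step that follows: solving this IC inequality for $\alpha_{i,2}$ requires evaluating $V_i^{f}(2)-V_i^{f}(1)$, which itself depends on $\alpha_{i,2}$ through the Markov transition probabilities, yielding an implicit equation rather than an explicit bound. Subtracting the two Bellman equations gives a single linear equation for this value difference that can be inverted in closed form; substituting back into the IC inequality and simplifying yields the claimed threshold $\alpha_{i,2}^{*}=\|\bm{\sigma}_i(\mathbf{2})\|/\bigl[\delta\,b_i(\hat{\bm{\sigma}}_i(2))\bigr]$, and the DCRS feasibility constraint $\|\bm{\sigma}_i(\mathbf{2})\|\le \delta\,b_i(\hat{\bm{\sigma}}_i(2))$ guarantees $\alpha_{i,2}^{*}\le 1$. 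Finally, Lemma~(2) together with Lemma~(3) ensure that IC at rating~$1$ is automatically satisfied by this choice, so no additional constraint needs to be imposed and the characterization is complete.
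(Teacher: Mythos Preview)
Your overall approach---compute the stationary probability $p_2^i$ of being at rating $2$, argue that welfare is monotone in each $p_2^i$, and then push $\beta_{i,1}$ up and $\alpha_{i,2}$ down subject to the IC constraints---is the same route the paper takes. The stationary-probability formula and the eventual threshold for $\alpha_{i,2}^*$ are both correct.

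However, your justification for $\beta_{i,1}^*=1$ contains a real error. You claim that ``a faster recovery only enlarges the continuation values, which by Lemma~(3) only widens the IC inequality.'' This is backwards for the IC constraint at rating $2$: the one-shot deviation at rating~$2$ drops the agent to rating~$1$ with probability $\alpha_{i,2}$, so its continuation value is $\alpha_{i,2}U_i^*(1)+(1-\alpha_{i,2})U_i^*(2)$. Raising $\beta_{i,1}$ increases $U_i^*(1)$ (punishment becomes milder), which \emph{raises} the deviation payoff and hence \emph{tightens} the IC at rating~$2$. Concretely, the paper derives the binding IC bound
\[
\alpha_{i,2}\;\ge\;\frac{1-\delta(1-\beta_{i,1})}{\delta}\,\frac{\|\bm\sigma_i(\mathbf{2})\|}{b_i(\hat{\bm\sigma}_i(2))},
\]
which is strictly increasing in $\beta_{i,1}$. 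The correct argument, which the paper gives, is that although the minimal feasible $\alpha_{i,2}$ rises with $\beta_{i,1}$, the ratio $\alpha_{i,2}/\beta_{i,1}$ (the quantity governing $p_2^i$) equals $\frac{1-\delta(1-\beta_{i,1})}{\delta\beta_{i,1}}\cdot\frac{\|\bm\sigma_i\|}{b_i}$ at the binding value, and this ratio is decreasing in $\beta_{i,1}$; hence $\beta_{i,1}^*=1$. Your proposal skips this trade-off entirely.

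Two smaller points. First, ``DCRS feasibility'' gives $\|\bm\sigma_i(\mathbf{2})\|\le\delta b_i(\hat{\bm\sigma}_i(2))$, which bounds what agent~$i$ \emph{sends}; the bracket in your welfare expression involves $\|\hat{\bm\sigma}_i(2)\|$, what agent~$i$ \emph{receives}. Nonnegativity of $b_i(\hat{\bm\sigma}_i(2))-\|\hat{\bm\sigma}_i(2)\|$ follows instead from optimality of the DCRS solution (zero is always feasible for $\hat{\bm\sigma}_i$). Second, Lemma~(2) varies $\hat{\bm\theta}_i$ for fixed $\theta_i$; it does not let you conclude that IC at $\theta_i=2$ implies IC at $\theta_i=1$. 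The paper handles rating~$1$ via a separate constraint on $\beta_{i,1}$, which at $\beta_{i,1}=1$ reduces exactly to $\delta b_i(\hat{\bm\sigma}_i(2))\ge\|\bm\sigma_i(\mathbf{2})\|$ and is therefore satisfied---but you need to check it, not invoke Lemma~(2).
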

\begin{proof}
We can derive the feasible values of $\alpha _{i,2} ,\beta _{i,1} ,\forall i$ for binary rating protocol, i.e.
\begin{equation} \label{ZEqnNum221816}
\beta _{i,1} \ge \frac{1-\delta }{\delta } \frac{\|{\bm \sigma }_{i} ({\bf 2})\|}{b_{i} (\hat{{\bm \sigma }}_{i} (2))-\|{\bm \sigma }_{i} ({\bf 2})\|}
\end{equation}
\begin{equation} \label{ZEqnNum2218162}
\alpha _{i,2} \ge \frac{1-\delta (1-\beta _{i,1} )}{\delta } \frac{\|{\bm \sigma }_{i} ({\bf 2})\|}{b_{i} (\hat{{\bm \sigma }}_{i} (2))}
\end{equation}

When monitoring is imperfect $\eps>0$, agent $i$ will drop to $\theta _{i} =1$ with positive probability even if it follows the recommended strategy all the time. According to the rating update rule, we can compute the stationary probability that agent $i$ stays at rating $\theta _{i} =2$, i.e.
\begin{equation} \label{ZEqnNum259722}
\frac{(1-\eps)\beta _{i,1} }{\eps\alpha _{i,2} +(1-\eps)\beta _{i,1} }
\end{equation}

Because agents having low ratings harms the social welfare, we need to select $\alpha _{i,2} ,\beta _{i,1} $ that maximizes \eqref{ZEqnNum259722}. This is equivalent to minimize $\alpha _{i,2} /\beta _{i,1} $.  For any $\beta _{i,1} $, the optimal value of $\alpha _{i,2} $ is the binding value of \eqref{ZEqnNum2218162} and hence, we need to minimize $[1-\delta (1-\beta _{i,1} )]/\beta _{i,1} $. Because $[1-\delta (1-\beta _{i,1} )]/\beta _{i,1} $ is decreasing in $\beta _{i,1} $, the optimal value of $\beta _{i,1} $ is $\beta _{i,1}^{*} =1$. Using \eqref{ZEqnNum221816}  again, the optimal value of $\alpha _{i,2}^{*} =\frac{\|{\bm \sigma }_{i} ({\bf 2})\|}{\delta b_{i} (\hat{{\bm \sigma }}_{i} (2))} $.          \end{proof}

It is worth noting that these probabilities can be computed locally by the SNIs of the agents which do not require any information from other agents.

\subsection{Example Revisited}
At this point, we have showed how the rating protocol can be determined in a distributed manner, given the network structure. It is time to revisit the cooperative estimation example for the ring and star topologies in order to illustrate the impact of topology on agents' incentives and recommended strategies. Figure \ref{ring-star2} illustrates the optimal recommended strategies computed using the method developed in this section for these two topologies.

\begin{figure}
\centerline{\includegraphics[scale = 0.7]{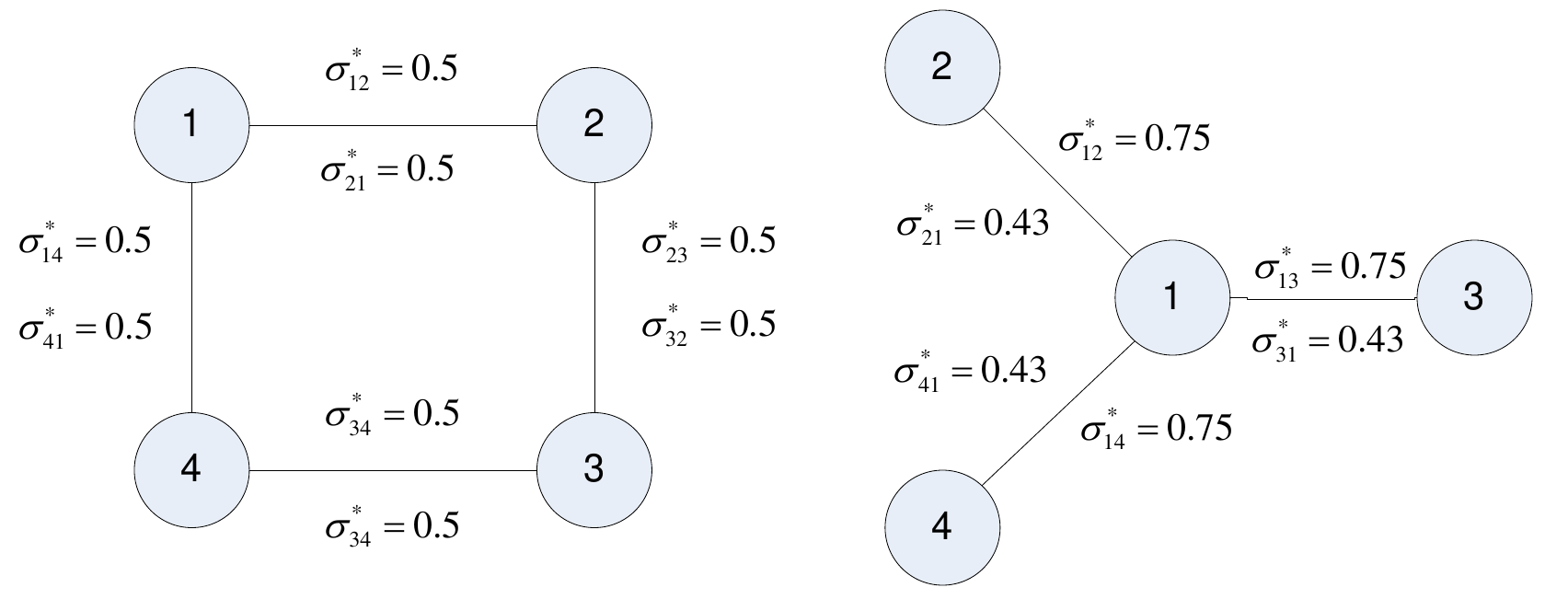}}
\caption{Optimal strategies for strategic agents interacting over a ring and a star. (The other elements of the rating protocol can be computed as in Section V(C) )}\label{ring-star2}
\vspace{-15pt}
\end{figure}

In the ring topology, agents are homogeneous and links are symmetric. As we can see, the optimal recommended strategy ${\bm \sigma }^{*} $ is exactly the same as the optimal action $\a^{opt} $ for obedient agent case because $\a^{opt} $ already provides sufficient incentive for strategic agents to follow. Therefore, we can easily determine that $PoA=1$. However, the strategic behavior of agents indeed degrades the social welfare in other cases, especially when the network becomes more heterogeneous and asymmetric, e.g. the star topologies. Even though taking $\a^{opt} $ maximizes the social welfare $V^{opt} =4$ in the star topology, these actions are not incentive-compatible for all agents. In particular, the maximum welfare $V^{opt} =4$ is achieved by sacrificing the individual utility of the center agent (i.e. agent 1 needs to contribute much more than it obtains). However, when agents are strategic, the center agent will not follow these actions $\a^{opt} $ and hence, $V^{opt} =4$ cannot be achieved. More problematically, since the center agent will choose not to participate in the information sharing process, the periphery agents do not obtain benefits and hence, they will also choose not to participate in the information sharing process. This leads to a network collapse. In the proposed rating protocol, the recommended strategies satisfy all agents' incentive constraints, namely $\delta b_{i} (\hat{{\bm \sigma }}_{i} (K))\ge \|{\bm \sigma }_{i} ({\bf K})\|,\forall i$. By comparing $\a^{opt} $ and ${\bm \sigma }^{*} $, we can see that the rating protocol recommends more information sharing from the periphery agents to the center agent and less information sharing from the central agent to the periphery agents than the obedient agent case. In this way, the center agent will obtain sufficient benefits from participating in the information sharing. However, due to this compensation for the center agent, the PoA is increased to $PoA=1.036$.

Note that the optimal recommended strategy for strategic agents is computed in a distributed way by the DCRS algorithm. Figure \ref{convergence} shows the intermediate values of the recommended strategy $\sigma _{12},\sigma _{21} $ by running the DCRS algorithm for the star. (Only the strategies between agents 1 and 2 are shown because the rest are identical due to the homogeneity of periphery agents).

\begin{figure}
\centerline{\includegraphics[scale = 0.7]{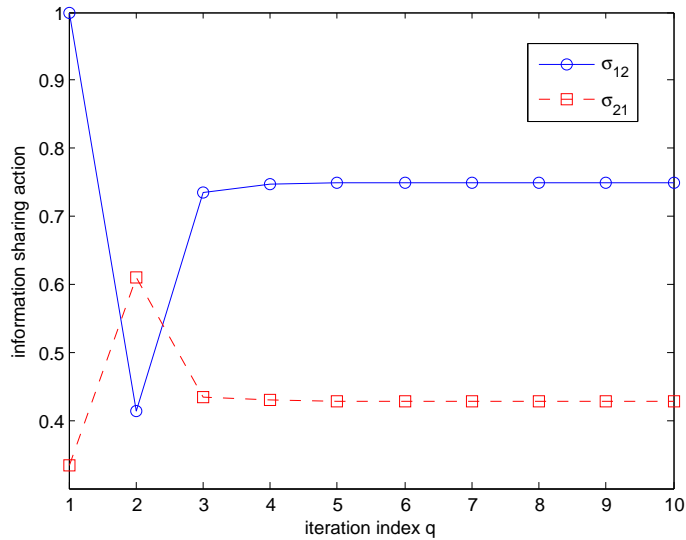}}
\caption{The recommended strategy obtained by running DCRS for the star topology.}\label{convergence}
\vspace{-15pt}
\end{figure}

\section{Performance Analysis}
In this section, we analyze the performance of the rating protocol and try to answer two important questions: (1) What is the performance loss induced by the strategic behavior of agents? (2) What is the performance improvement compared to other (simple) incentive mechanisms?

\subsection{Price of Anarchy}
Observe the social welfare maximization problems \eqref{ZEqnNum918213} and \eqref{ZEqnNum960030} for obedient agents and strategic agents (by using rating protocols), respectively. It is clear that the social welfare achieved by the rating system is always no larger than that obtained when agents are obedient due to the equilibrium constraint; hence, i.e. $PoA\ge 1$. The exact value of PoA will, in general, depend on the specific network structure (topology and individual utility functions). In this subsection, we identify a sufficient condition for the connectivity degree of the topology such that PoA is one. To simplify the analysis, we assume that agents' benefit functions are homogeneous and depend only on the sum information sharing action of the neighboring agents, i.e. $b_{i} (\hat{\a}_{i} )=b(\sum _{j:g_{ij} =1} a_{ji} )$.  Let $d_{i} =\sum _{j}g_{ij}  $ be the number of neighbors of agent $i$. The degree of network $G$ is defined as $d=\mathop{\max }\limits_{i} d_{i} $.

\begin{proposition}
Suppose benefit function structure $b_{i} (\hat{\a}_{i} )=b(\sum _{j:g_{ij} =1} a_{ji} ),\forall i$, if the connectivity degree $d$ is no larger than $\bar{d}$ such that $\delta b(\bar{d})-\bar{d}=0$, then $V^{*} =V^{opt} $, i.e. PoA is one.
\end{proposition}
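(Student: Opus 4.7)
The plan is to invoke Theorem~2 by exhibiting an obedient-optimal action profile that already satisfies its equilibrium inequality $\delta b_i(\hat{\bm\sigma}_i(K))\ge c_i({\bm\sigma}_i(\K))$ for every $i$; the rating protocol built on that profile then achieves welfare $V^{opt}$. The first step is to re-derive the obedient optimum in a convenient form. Under Proposition~2's structural assumption on $b$ together with the linear cost, the identity $\sum_i\|\a_i\|_1=\sum_i\|\hat\a_i\|_1$ (each edge is counted once on each side) rewrites $V=\sum_i[b(\|\hat\a_i\|_1)-\|\hat\a_i\|_1]$, which decouples across receivers. Hence the obedient-optimal incoming total at agent~$i$ is $S_i^{opt}:=\arg\max_{S\in[0,d_i]}[b(S)-S]$, and in particular $S_i^{opt}\le d_i\le d\le\bar d$.

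Next I would use the degree bound to extract the crucial scalar inequality. Concavity of $b$ makes $x\mapsto\delta b(x)-x$ concave; it is nonnegative at $x=0$ (since $b$ is non-decreasing with $b(0)\ge 0$) and vanishes at $x=\bar d$ by definition, so concavity forces $\delta b(x)\ge x$ on $[0,\bar d]$. Combined with $S_i^{opt}\le\bar d$, this yields $\delta b(S_i^{opt})\ge S_i^{opt}$ for every $i$.

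I would then pick an obedient-optimal $\a^{opt}\in[0,1]^{E(G)}$ that is \emph{balanced}, i.e., $\|\a_i^{opt}\|_1=\|\hat\a_i^{opt}\|_1=S_i^{opt}$ at every node, and set ${\bm\sigma}(\K):=\a^{opt}$. The equilibrium condition of Theorem~2 then reduces to $\|{\bm\sigma}_i(\K)\|_1=S_i^{opt}\le\delta b(S_i^{opt})=\delta b_i(\hat{\bm\sigma}_i(K))$, which is exactly the inequality just proved. Plugging ${\bm\sigma}(\K)=\a^{opt}$ into \eqref{ZEqnNum559005} gives welfare $\sum_i[b(S_i^{opt})-S_i^{opt}]=V^{opt}$, so $V^*=V^{opt}$ and $\mathrm{PoA}=1$.

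The main obstacle I expect is producing the balanced obedient-optimal $\a^{opt}$: the obedient problem fixes each agent's incoming total but not its outgoing total, and the out-flow has to be realized along edges of $G$ under the box constraints $a_{ij}\in[0,1]$. I would formalize existence as a feasibility question on a bipartite sender/receiver flow network with unit edge capacities, receiver-side demands $S_i^{opt}$ and sender-side supplies $\delta b(S_i^{opt})$, and verify the Hall/max-flow--min-cut condition by leaning on $S_i^{opt}\le d_i$ together with the slack $\delta b(S_i^{opt})-S_i^{opt}\ge 0$ delivered by the degree bound $d\le\bar d$.
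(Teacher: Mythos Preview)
Your route diverges from the paper's at the very first step, and that divergence is precisely what creates the obstacle you flag at the end. The paper does not work with the per-agent maximizer $S_i^{opt}=\arg\max_{S\in[0,d_i]}[b(S)-S]$; instead it argues that under the degree bound the obedient optimum is simply \emph{full sharing}, $a_{ij}^{opt}=1$ on every link. With full sharing each agent's outgoing total and incoming total both equal its degree $d_i$, so the ``balance'' you are trying to manufacture comes for free, and the inequality of Theorem~2 collapses to $d_i\le\delta b(d_i)$, which holds because $d_i\le d\le\bar d$ and $\delta b(x)-x\ge 0$ on $[0,\bar d]$ by exactly the concavity argument you already gave. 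No flow construction is needed.

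The max-flow/Hall sketch you propose for realizing a balanced profile with arbitrary $S_i^{opt}$ does not work in general, and the failure is structural rather than technical. Take a star whose center has degree $d_c$ with $\arg\max_S[b(S)-S]<d_c\le\bar d$: each leaf has $S_\ell^{opt}=1$ (since $b(S)-S$ is still increasing at $S=1$), which forces $a_{c\ell}=1$ for every leaf; the center's outgoing total is then $d_c$, and no realization with center-outgoing equal to $S_c^{opt}<d_c$ exists. So the obstacle you anticipated is fatal to your plan as written. The remedy is the paper's shortcut of taking $a_{ij}\equiv 1$. (You might note, though, that the paper's own assertion that full sharing is obedient-optimal whenever $d<m^*$ is not fully justified by concavity of $b$ alone---it tacitly requires $b(S)-S$ to be nondecreasing on $[0,d]$, i.e.\ marginal benefit at least one up to degree $d$; when that holds your $S_i^{opt}=d_i$ and the two arguments coincide.)
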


\begin{proof}
Due to the concavity of the benefit function (Assumption), there exists $m^{*} $ such that if $d>m^{*} $, $b(d)-d<0$ and if $d\le m^{*} $, $b(d)-d\ge 0$. If the connectivity degree satisfies $d<m^{*} $, then the optimal solution of \eqref{ZEqnNum918213} is $a_{ij} =1,\forall i,j:g_{ij} =1$. That is, optimality is achieved when all agents share the maximal amount of information with all their neighbors. Therefore, $\forall d<m^{*} $, the agent $i$'s benefit is $b(m_{i} )$ and its cost is $m_{i} $ in the optimal solution.

Again due to the concavity of the benefit function, there exists $\bar{d}\le m^{*} $ (inequality is due to $\delta \in (0,1]$) such that if $d>\bar{d}$, $\delta b(d)-d<0$ and if $d\le \bar{d}$, $\delta b(d)-d\ge 0$.Therefore, if $d\le \bar{d}$, $\forall i$, agent $i$'s benefit and cost satisfy $\delta b(m_{i} )-m_{i} \ge 0$. This satisfies the equilibrium constraint due to Theorem 2. Therefore, the achievable social welfare is the same.            \end{proof}

Proposition 2 states that when the connectivity degree is low, the proposed rating protocol will achieve the optimal performance even when agents are strategic.

\subsection{Comparison with Direct Reciprocation}
The proposed rating protocol is not the only incentive mechanism that can incentivize agents to share information with other agents. A well-known direct reciprocation based incentive mechanism is the Tit-for-Tat strategy, which is widely adopted in many networking applications \cite{Axelrod}-\cite{Milan}. The main feature of the Tit-for-Tat strategy is that it exploits the repeated \textit{bilateral} interactions between connected\textit{ }agents, which can be utilized to incentivize agents to \textit{directly} reciprocate to each other. However, when agents do not have bilateral interests, such mechanisms fail to provide such incentives and direct reciprocity algorithms cannot be applied.

Nevertheless, even if we assume that interests are bilateral between agents, our proposed rating protocol is still guaranteed to outperform the Tit-for-Tat strategy when the utility function takes a concave form as assumed in this paper. Intuitively, because the marginal benefit from acquiring information from one neighbor is decreasing in the total number of neighbors, agents become less incentivized to cooperate when their deviation towards some neighboring agent would not affect future information acquisition from others, as is the case with the Tit-for-Tat strategy. In the following, we formally compare our proposed rating protocol with the Tit-for-Tat strategy. We assume that an agent $i$ has two sharing actions that it can choose to collaborate with its neighboring agent $j$, i.e. $\{ 0,\bar{a}_{ij} \} $ where $\bar{a}_{ij} \in (0,1]$. The Tit-for-Tat strategy prescribes the action for each agent $i$ as follows, $\forall j:g_{ij} =1$,
\begin{equation} \label{17)}
\begin{array}{l} {a_{ij} (0)=\bar{a}_{ij} } \\ {a_{ij} (t+1)=\left\{\begin{array}{l} {\bar{a}_{ij} ,{\rm \; \; \; if\; \; }a_{ji} (t)=\bar{a}_{ji} } \\ {0,{\rm \; \; \; \; \; if\; \; }a_{ji} (t)=0} \end{array}\right. ,\forall t\ge 0} \end{array}
\end{equation}

\begin{proposition}
Given the network structure and the discount factor, any action profile $\bar{a}$ that can be sustained by the Tit-for-Tat strategy can also be sustained by the rating protocol.
\end{proposition}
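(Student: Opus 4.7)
The plan is to translate TfT's per-link sustainability into an aggregate constraint that exactly matches the PPE condition of Theorem~2. Applying the one-shot deviation principle to the TfT rule in (17): a unilateral deviation by agent $i$ on a single link $(i,j)$ causes agent $j$ to withhold cooperation on that specific link from the next period onward (the continuation play induced by TfT along that link; equivalently, grim trigger along the link). Working out the continuation values yields, for every agent $i$ and every neighbor $j$, the per-link constraint
\begin{equation*}
    \bar{a}_{ij} \;\le\; \delta\bigl[b_i(\bar{\hat{\a}}_i)-b_i(\bar{\hat{\a}}_{i,-j})\bigr],
\end{equation*}
where $\bar{\hat{\a}}_{i,-j}$ denotes $\bar{\hat{\a}}_i$ with its $j$-th coordinate replaced by $0$, capturing the fact that TfT only erases cooperation on the exact link deviated upon. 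Summing these $d_i=|\{j:g_{ij}=1\}|$ constraints for agent $i$ gives the aggregate bound $\|\bar{\a}_i\|_1 \le \delta\sum_{j}[b_i(\bar{\hat{\a}}_i)-b_i(\bar{\hat{\a}}_{i,-j})]$.

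The heart of the argument is then to show that concavity of $b_i$ together with $b_i(0)\ge 0$ implies $\sum_{j}[b_i(\bar{\hat{\a}}_i)-b_i(\bar{\hat{\a}}_{i,-j})]\le b_i(\bar{\hat{\a}}_i)$, equivalently $\sum_j b_i(\bar{\hat{\a}}_{i,-j})\ge (d_i-1)\,b_i(\bar{\hat{\a}}_i)$. I would establish this by writing each $\bar{\hat{\a}}_{i,-j}$ as a convex combination of $\bar{\hat{\a}}_i$ and the origin, so that concavity yields $b_i(\bar{\hat{\a}}_{i,-j})\ge \lambda_j b_i(\bar{\hat{\a}}_i)+(1-\lambda_j)b_i(0)\ge \lambda_j b_i(\bar{\hat{\a}}_i)$ for weights $\lambda_j$ that sum to $d_i-1$. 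In the paper's canonical scalar-aggregated setting $b_i(\hat{\a}_i)=b(\sum_k a_{ki})$ these weights are simply $\lambda_j=(\|\bar{\hat{\a}}_i\|_1-\bar{a}_{ji})/\|\bar{\hat{\a}}_i\|_1$, whose sum is $d_i-1$, and the required bound $b(\lambda s)\ge \lambda b(s)$ follows directly from concavity and $b(0)\ge 0$.

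Combining the two steps yields $\|\bar{\a}_i\|_1\le \delta b_i(\bar{\hat{\a}}_i)$ for every agent $i$. This is exactly the sufficient condition of Theorem~2 (with the recommended strategy at the top rating set to $\bar{\a}_i$), so a PPE of the rating protocol that sustains $\bar{\a}$ can be constructed. The expected main obstacle is the concavity step: expressing $\bar{\hat{\a}}_{i,-j}$ as a convex combination of $\bar{\hat{\a}}_i$ and the origin with weights summing correctly to $d_i-1$ is subtle for a genuinely multidimensional jointly concave $b_i$, and is cleanest when the benefit aggregates its arguments through their sum; care is also needed to invoke $b_i(0)\ge 0$ so that the Jensen-type bound actually goes in the useful direction.
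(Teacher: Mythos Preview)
Your approach is essentially the paper's: derive the per-link TfT constraint $\bar a_{ij}\le\delta\bigl[b_i(\bar{\hat{\a}}_i)-b_i(\bar{\hat{\a}}_{i,-j})\bigr]$ (the paper's inequality~\eqref{ZEqnNum345510}) and then aggregate via concavity to recover the Theorem~2 condition $\|{\bm\sigma}_i\|\le\delta b_i(\hat{\bm\sigma}_i)$; your explicit treatment of the aggregation step---summing over neighbors, invoking $b_i(0)\ge 0$, and noting that the convex-combination argument is clean only in the scalar sum-aggregated case---actually fills in what the paper's proof leaves as ``it is easy to see.'' One small correction: the continuation after a one-shot deviation under TfT is the oscillating pattern $(\{\bar a_{ij},0\},\{0,\bar a_{ji}\},\{\bar a_{ij},0\},\ldots)$ that the paper writes out, not grim trigger along the link, though working through that oscillation yields exactly the same per-link inequality you state.
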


\begin{proof}
Consider the interactions between any pair of agents $i,j$. In the Tit-for-Tat strategy, the long-term utility of agent $i$ by following the strategy when agent $j$ played $\bar{a}_{ji} $ in the previous period is $U_{i} =\frac{\tilde{b}_{ji} (\bar{a}_{ji} )-\bar{a}_{ij} }{1-\delta } $ where $\tilde{b}_{ji} (x)=b_{i} (\hat{a}_{i} |a_{ki} =\bar{a}_{ki} ,a_{ji} =x)$. If agent $i$ deviates in the current period, Tit-for-Tat induces a continuation history $(\{ \bar{a}_{ij} ,0\} ,\{ 0,\bar{a}_{ji} \} ,\{ \bar{a}_{ij} ,0\} ...)$ where the first components are agent $i$'s actions and the second components is agent $j$'s actions.  The long-term utility of agent $i$ by one-shot deviation is thus
\begin{equation} \label{18)}
U_{i} '=\frac{\tilde{b}_{ji} (\bar{a}_{ji} )}{1-\delta ^{2} } +\delta \frac{\tilde{b}_{ji} (0)-\bar{a}_{ij} }{1-\delta ^{2} }
\end{equation}

Incentive-compatibility requires that $U_{i} \ge U_{i} '$ and therefore
\begin{equation} \label{ZEqnNum345510}
\delta (\tilde{b}_{ji} (\bar{a}_{ji} )-\tilde{b}_{ji} (0))\ge \bar{a}_{ij}
\end{equation}

Due to the concavity of the benefit function, it is easy to see that \eqref{ZEqnNum345510} leads to $\delta b_{i} (\hat{\a}_{i} )\ge \|\a_{i}\| $ which is a sufficient condition for the rating protocol to be an equilibrium.
\end{proof}

Proposition 3 proves that the social welfare achievable by the rating protocol equals or exceeds that of the Tit-for-Tat strategy, which confirms the intuitive argument before that diminishing marginal benefit from information acquisition would result in less incentives to cooperate in an environment with only direct reciprocation than in one allowing indirect reciprocation. We note that different action profiles $\bar{\a}$ will generate different social welfare. However, computing the best $\bar{\a}$ among the incentive-compatible Tit-for-Tat strategies is often intractable since \eqref{ZEqnNum345510} is a non-convex constraint. Hence, implementing the best Tit-for-Tat strategy to maximize the social welfare is often intractable. In contrast, the proposed rating protocol does not have this problem since the equilibrium constraint established in Theorem 2 is convex and hence, the optimal recommended strategy can be solved distributed by the proposed DCRS algorithm.

\section{Growing Networks}
In Section V, we designed the optimal rating protocol by assuming that the network topology is time-invariant. In practice, the social network topology can also change over time due to, e.g. new agents joining the network and new links being created. Nevertheless, our framework can easily handle such growing networks by adopting a simple extension which refreshes the rating protocol (i.e. re-computes the recommended strategy, rating update rules and re-initializes the ratings of agents) with a certain probability each period. We call this probability the refreshing rate and denote it by $\rho \in [0,1]$. When topologies are changing, the refreshing rate will also be an important design parameter of the rating protocol.

Consider that the rating protocol was refreshed at period $T$ the last time.  Denote the probability that the rating protocol is refreshed at time $T+t$ as $p(t)$. Denote the network in period $t$ by $G(t)$. We assume that in each period a number $n(t)$ of new agents join the network and stay forever. Therefore, the network topology $G(t+1)$ will be formed based on $G(t)$ and the new agents. Let $V^{*} (G;\rho )$ be the social welfare achieved by the rating protocol if the network topology is $G$ and the refreshing is set to be $\rho $. Since there are no recommended strategy and update rules concerning the new agents before the next refreshment, existing agents have no incentives to share information with the new agents and vice versa, the new agents have no incentives to share information with their neighbors. Hence, the average social welfare achieved by the rating protocol before the next refreshment is $V^{*} (G(T);\rho )$.  The optimal refreshing rate design problem is thus,
\begin{equation} \label{ZEqnNum714883}
\begin{array}{c}
\rho ^{*} =\arg \mathop{\max }\limits_{\rho } \bigg(\underbrace{{\rm {\mathbb E}}\sum _{t=0}^{\infty }p (t)\frac{1}{t+1} \sum _{\tau =0}^{t}V^{opt} (G(T+\tau )) }_{{\rm expected\; optimal\; social\; welfare}}-\underbrace{V^{*} (G(T);\rho )}_{{\rm social\; welfare\; achieved\; by\; the\; rating\; protocol}}\bigg)
\end{array}
\end{equation}
The first term in \eqref{ZEqnNum714883} is the expected optimal social welfare and the second term is the social welfare achieved by the rating protocol.

We first investigate the expected optimal social welfare. Let the social welfare variance be $\Delta _{V} (t+1)\triangleq V^{OPT} (G(t+1))-V^{OPT} (G(t))$. It is easy to see that $\Delta _{V} (t+1)\ge 0$. We assume that the expected social welfare contribution of new agents is ${\rm {\mathbb E}}(\Delta _{V} (t))=\Delta _{v} $ which is time-independent. Given the refreshing rate $\rho $, the expected time-average optimal social welfare from $T$ to the next refreshing period can be computed as
\[\begin{array}{l} {\rm {\mathbb E}}\sum _{t=0}^{\infty }p (t)\frac{1}{t+1} \sum _{\tau =0}^{t}V^{opt} (G(T+\tau )) \\={\rm {\mathbb E}}\sum _{t=0}^{\infty }\rho  (1-\rho )^{t} \frac{1}{t+1} \sum _{\tau =0}^{t}V^{opt} (G(T+\tau ))  \\ {=V^{opt} (G(T))+\sum _{t=0}^{\infty }\rho (1-\rho )^{t} \frac{1}{t+1} {\rm {\mathbb E}}\sum _{\tau =0}^{t}\Delta _{V} (T+\tau )  } \\ =V^{opt} (G(T))+\sum _{t=0}^{\infty }\rho (1-\rho )^{t} \frac{1}{t+1} \frac{t(t+1)}{2} \Delta _{V}  \\=V^{opt} (G(T))+\frac{(1-\rho )\Delta _{V} }{2\rho }  \end{array}\]

Hence, the expected optimal social welfare is decreasing in the refreshing rate $\rho $.

Next, we investigate the relation between $V^{*} (G(T);\rho )$ and $\rho $. This is established in the proposition below.

\begin{proposition}
$V^{*} (G(T);\rho )$ is non-decreasing in $\rho $.
\end{proposition}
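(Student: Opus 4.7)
The plan is to decompose $V^*(G(T);\rho)$ into an expected cycle-averaged welfare under a geometric inter-refresh distribution, show that the within-cycle per-period welfare is non-increasing in time, and conclude via stochastic dominance.

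The starting point is to write
\begin{equation*}
V^*(G(T);\rho)=\sum_{t=0}^{\infty}\rho(1-\rho)^t f(t),\qquad f(t):=\tfrac{1}{t+1}\sum_{\tau=0}^{t}V(\tau),
\end{equation*}
where $V(\tau)$ denotes the expected per-period social welfare at $\tau$ periods after the last refresh (conditional on no refresh having occurred yet); this follows from the geometric parametrization $p(t)=\rho(1-\rho)^t$ adopted just before the proposition.

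Next, I would argue that $V(\tau)$ is non-increasing in $\tau$, so that $f(t)$ inherits the same monotonicity. Since $\sigma_{ij}(1)=0$ under the binary protocol of Proposition 1, no agent shares with a neighbor at the lowest rating, and a direct calculation yields $V(\tau)=\sum_i p_i(\tau)\,\gamma_i$ with $p_i(\tau):=\Pr(\theta_i(\tau)=K)$ and $\gamma_i:=b_i(\hat{{\bm \sigma}}_i(K))-\|\hat{{\bm \sigma}}_i(K)\|_1$. Because the SNIs update ratings locally and independently, each $p_i(\tau)$ obeys the scalar recursion $p_i(\tau+1)=(1-\epsilon)+\epsilon(1-\alpha_{i,2}^*)p_i(\tau)$ from Proposition 1 with $p_i(0)=1$, whose closed-form solution decreases monotonically to its stationary value; combined with $\gamma_i\ge 0$ at the DCRS solution (argued below), this yields $V(\tau+1)\le V(\tau)$. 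A stochastic-dominance argument then finishes: for $\rho_1>\rho_2$ the tail $\Pr_\rho(T\ge k)=(1-\rho)^k$ is smaller under $\rho_1$, so the cycle length $T$ is stochastically smaller under $\rho_1$, and integrating the non-increasing function $f$ gives $\mathbb{E}_{\rho_1}[f(T)]\ge\mathbb{E}_{\rho_2}[f(T)]$, i.e., $V^*(G(T);\rho_1)\ge V^*(G(T);\rho_2)$.

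The hardest step is establishing $\gamma_i\ge 0$ at the solution of \eqref{ZEqnNum960030}. The intended perturbation argument is: if some $\gamma_i<0$, simultaneously shrinking the incoming column $\{\sigma_{ji}\}_{j\sim i}$ and the outgoing row $\{\sigma_{ij}\}_{j\sim i}$ by a common factor $1-\eta$ raises $\gamma_i$ (by concavity of $b_i$ with $b_i(0)=0$, $\gamma_i^{\mathrm{new}}\ge(1-\eta)\gamma_i>\gamma_i$) and automatically preserves $i$'s own incentive constraint under the symmetric scaling. The delicate piece is bounding the collateral reduction in the neighbors' $\gamma_k$ via concavity of $b_k$ and the KKT multipliers from the DCRS dual; these estimates must be summed to a net welfare increase, contradicting optimality of the original $\bm\sigma$ and forcing $\gamma_i\ge 0$.
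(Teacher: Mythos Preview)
Your proposal misidentifies both the object $V^{*}(G(T);\rho)$ and the channel through which $\rho$ acts on it. In the paper, $V^{*}(G(T);\rho)$ is the optimal value of the recommended-strategy design problem \eqref{ZEqnNum960030} once refreshing is in force; the only role of $\rho$ is that, because a refresh arrives each period independently with probability $\rho$, an agent's continuation payoff is effectively discounted by $(1-\rho)\delta$ rather than $\delta$. The paper's proof is then essentially two lines: the incentive constraint of Theorem~2 becomes $(1-\rho)\delta\, b_{i}(\hat{{\bm \sigma}}_{i}(K))\ge \|{\bm \sigma}_{i}(\K)\|$, so raising $\rho$ tightens the feasible set of \eqref{ZEqnNum960030} and can only (weakly) lower its optimal value. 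Note that this argument in fact shows $V^{*}(G(T);\rho)$ is non-\emph{increasing} in $\rho$; the printed statement and the surrounding sentence (``a smaller $\rho$ provides more incentives'') appear to carry a sign slip, and your attempt proves the direction as literally stated rather than the one the paper's own proof establishes.

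Your construction instead fixes the recommended strategy at the DCRS output (computed with no dependence on $\rho$) and attributes all $\rho$-dependence to a within-cycle transient caused by monitoring errors, averaged over a geometric cycle length. That is a different quantity from the paper's: there the strategy ${\bm \sigma}$ is re-optimized for each $\rho$, the social welfare \eqref{ZEqnNum559005} is already the asymptotic value rather than a finite-cycle average, and the argument makes no appeal to imperfect monitoring whatsoever. Consequently your stochastic-dominance step, the $p_{i}(\tau)$ recursion, and the elaborate $\gamma_{i}\ge 0$ perturbation lemma are all aimed at the wrong target---and, tellingly, they deliver the opposite monotonicity from the paper's proof.
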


\begin{proof}
Due to the refreshing, an agent $i$'s long-term utility becomes
\begin{equation} \label{21)}
U_{i} (t)=u_{i} (\a(t))+(1-\rho )\delta u_{i} (\a(t+1))+[(1-\rho )\delta ]^{2} u_{i} (\a(t+2))+...
\end{equation}

Hence, following the similar proof of Theorem 2, agents' incentives can be provided if and only if $(1-\rho )\delta b_{i} (\hat{{\bm \sigma }}_{i} (K))\ge {\bm \sigma }_{i} (\K),\forall i$. Therefore the constraint in the optimal strategy design problem \eqref{ZEqnNum960030} becomes stronger for the rating protocol with refreshing. Hence, the achievable social welfare becomes (weakly) lower.
\end{proof}

Summarizing, the refreshing rate impacts the social welfare gap in two different ways. On one hand, $\frac{(1-\rho )\Delta _{V} }{2\rho } $ is non-decreasing in $\rho $ since a larger $\rho $ leads to a better adaptation of the rating protocol to the changing topology. On the other hand, $V^{*} (G(T);\rho )$ is also non-decreasing in $\rho $ since a smaller $\rho $ provides more incentives for agents to follow the rating protocol designed in period $T$. Therefore, the refreshing rate has to balance these two effects. In the simulations, we will show how different refreshing rates influence the social welfare in various exemplary scenarios.

\section{Illustrative Results}
In this section, we provide simulation results to illustrate the performance of the rating protocol. In all simulations, we consider the cooperative estimation problem introduced in Section III (A). Therefore, agents' utility function takes the form of $u_{i} (\a(t))=[r^{2} -MSE_{i} (\hat{\a}_{i} (t))]-\a_{i} (t)$ \cite{Chen}. We will investigate different aspects of the rating protocol by varying the underlying topologies and the environment parameters.

\subsection{Impact of Network Topology}
Now we investigate in more detail how the agents' connectivity shapes their incentives and influences the resulting social welfare. In the first experiment, we consider the cooperative estimation over star topologies with different sizes (hence, different connectivity degrees). Figure \ref{stardegree} shows the PoA achieved by the rating protocol for discount factors $\delta =1,0.9,0.8,0.7$ for the noise variance $r^{2} =8$. As predicted by Proposition 3, when the connectivity degree is small enough, the PoA equals one and hence, the performance gap is zero. As the network size increases (hence the connectivity degree increases in the star topology), the socially optimal action requires the center agent to share more information with the periphery agents. However, it becomes more difficult for the center agent to have incentives to do so since the information sharing cost becomes much larger than the benefit. In order to provide sufficient incentives for the center agent to participate in the information sharing process, the rating protocol recommends less information sharing from the center agent to each periphery agent. However, incentives are provided at a cost of reduced social welfare. Figure \ref{stardegree} also reveals that when agents' discount factor is lower (agents value less the future utility), incentives are more difficult to provide and hence, the PoA becomes higher. In the next simulation, we study scale-free networks in the imperfect monitoring scenarios. In scale-free networks, the number of neighboring agents is distributed as a power law (denote the power law parameter by $d^{SF} $). Table 3 shows the PoA achieved by the rating protocol developed in Section V(C) for various values of $d^{SF} $ and different monitoring error probabilities $\eps$. As we can see, the proposed rating protocol achieves close-to-optimal social welfare in all the simulated environments

\begin{figure}
\centerline{\includegraphics[scale = 0.8]{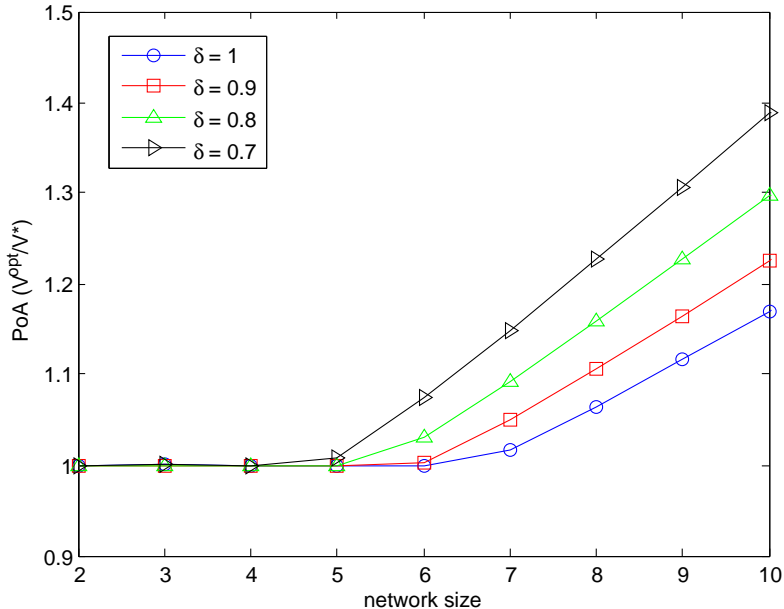}}
\caption{Performance of the rating protocol for various connectivity degrees in star topologies.}\label{stardegree}
\vspace{-5pt}
\end{figure}

\begin{table}[t]
\centerline{\includegraphics[scale = 0.9]{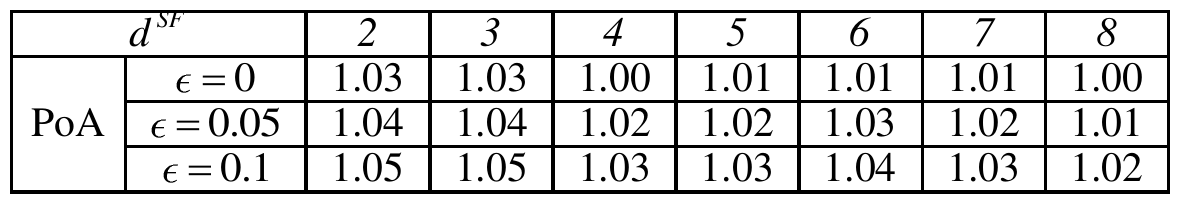}}
\caption{Performance of the rating protocol for various   in scale-free topologies.}\label{Table3}
\vspace{-15pt}
\end{table}

\subsection{Comparison with Tit-for-Tat}
As mentioned in the analysis, incentive mechanisms based on direct reciprocation such as Tit-for-Tat do not work in networks lacking bilateral interests between connected agents and hence, reasons to mutually reciprocate. In this simulation, to make possible a direct comparison with the Tit-for-Tat strategy, we consider a scenario where the connected agents do have bilateral interest and show that the proposed rating protocol significantly outperforms the Tit-for-Tat strategy. In general, computing the optimal action profile $\bar{a}^{*} $ for the Tit-for-Tat strategy is difficult because it involves the non-convex constraint $\delta (b_{i} (\{ \bar{a}_{ki}^{*} \} _{k:g_{ik} =1} )-b_{i} (\{ \bar{a}_{ki}^{*} \} _{k\ne j:g_{ik} =1} ,0))\ge \bar{a}_{ij}^{*} $, $\forall i,\forall j\ne i:g_{ij} =1$; such a difficulty is not presented in our proposed rating protocol because the constraints in our formulated problem are convex. For tractability, here we consider a symmetric and homogeneous network to enable the computation of the optimal action for the Tit-for-Tat strategy. We consider a number $N=100$ of agents and that the number of neighbors of each agent is the same $d_{i} =d,\forall i$ and each agent adopts a symmetric action profile $\bar{a}_{ij} =\bar{a},\forall i,j$. The noise variance is set to be $r^{2} =4$ in this simulation. Figure \ref{tft} illustrates the PoA achieved by the proposed rating protocol and the Tit-for-Tat strategy. As predicted by Proposition 4, any action profile that can be sustained by the Tit-for-Tat strategy can also be sustained by the proposed rating protocol (for the same $\delta $). Hence, the rating protocol yields at least as much social welfare as the Tit-for-Tat strategy. As the discount factor becomes smaller, agents' incentives to cooperate become less and hence, the PoA is larger.

\begin{figure}
\centerline{\includegraphics[scale = 0.8]{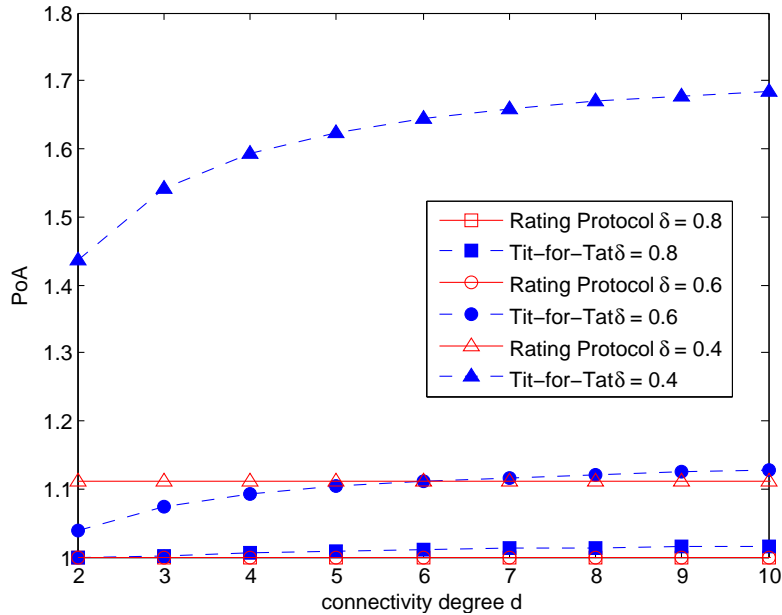}}
\caption{Performance comparison with Tit-for-Tat.}\label{tft}
\vspace{-5pt}
\end{figure}

\subsection{Rating Protocol with Refreshing}
Finally, we consider the optimal choice of the rating protocol refreshing rate $\rho $ when the network is growing as considered in section VIII. In this simulation, the network starts with $N=50$ agents. In each period, a new agent joins the network with probability 0.1 and stays in the network forever. Any two agents are connected with \textit{a priori} probability 0.2. We vary the refreshing rate from 0.005 to 0.14. Table 4 records the PoA achieved the rating protocol with refreshing for $\delta =0.4$. It shows that the optimal refreshing rate needs to be carefully chosen. If $\rho $ is too large, the incentives for agents to cooperate is small hence, the incentive-compatible rating protocol achieves less social welfare. If $\rho $ is too small, the rating protocol is not able to adapt to the changing topology well. This introduces more social welfare loss in the long-term as well. The optimal refreshing rate in the simulated network is around 0.04.

\begin{table}[t]
\centerline{\includegraphics[scale = 1]{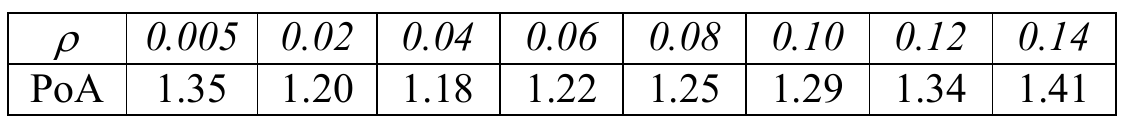}}
\caption{PoA of rating protocols with different refreshing rates. }\label{Table4}
\vspace{-15pt}
\end{table}

\section{Conclusions}
In this paper, we studied how to design distributed incentives protocols (based on ratings) aimed at maximizing the social welfare of repeated information sharing among strategic agents in social networks. We showed that it is possible to exploit the ongoing nature of agents' interactions to build incentives for agents to cooperate based on rating protocols. The proposed design framework of the rating protocol enables an efficient way to implement social reciprocity in distributed information sharing networks with arbitrary topologies and achieve much higher social welfare than existing incentive mechanisms. Our analysis also reveals the impact of different topologies on the achievable social welfare in the presence of strategic agents and hence, it provides guidelines for topology configuration and planning for networks with strategic agents. The proposed rating protocols can be applied in a wide range of applications where selfish behavior arises due to cost-benefit considerations including problems involving interactions over social networks, communications networks, power networks, transportation networks, and computer networks.

\section*{Appendix: Proof of Theorem 2}
According to Lemma, we know that it suffices to ensure that agent $i$ has the incentives to following the recommended strategy when other agents' ratings are $\K$ (i.e. all other agents have the highest rating $K$). However, we need to ensure this holds for all ratings of agent $i$. We will write ${\bm \sigma }_{i} (\K)$ as ${\bm \sigma }_{i} $ and $\hat{{\bm \sigma }}_{i} (K)$ as $\hat{{\bm \sigma }}_{i} $ to keep the notation simple.

We prove the ``only if'' part first, i.e. if $\|\bm \sigma_i\| \geq \delta b_i(\hat{\bm \sigma_i})$. Consider rating level $k$, if agent $i$ follows the recommended strategy, its long-term utility is
\begin{align}
U_i(k, \bm\sigma_i) = u_i(k, \bm\sigma_i) + \delta(\beta_{i,k} U^*_i(k+1) + (1 - \beta_{i,k} U^*_i(k))
\end{align}
By deviation to ${\bf 0}$, its long-term utility is
\begin{align}
U_i(k, {\bf 0}) = u_i(k, {\bf 0}) + \delta(\alpha_{i,k} U^*_i(k-1) + (1 - \alpha_{i,k} U^*_i(k))
\end{align}
Equilibrium requires that $U_i(k, \bm\sigma_i)  \geq U_i(k, {\bf 0})$. Hence,
\begin{equation}
\begin{aligned}
&u_i(k, {\bf 0}) - u_i(k, \bm\sigma_i)\\
\leq &\delta[(\beta_{i,k} U^*_i(k+1) + (1 - \beta_{i,k}) U^*_i(k)) \\
&- (\alpha_{i,k} U^*_i(k-1) + (1-\alpha_{i,k}) U^*_i(k))]
\end{aligned}
\end{equation}
By Lemma (3), $U^*_i(K) \geq U^*_i(k),\forall k$. Therefore, PPE requires
\begin{align}
u_i(k, {\bf 0}) - u_i(k, \bm\sigma_i) \leq \delta U^*_i(K) \label{eq34}
\end{align}
Because $u_i(k, {\bf 0}) - u_i(k, \bm\sigma_i) = \|\bm\sigma_i\|$ and
\begin{align}
U^*_i(K) = \frac{1}{1-\delta} u_i(K, \bm\sigma_i) = \frac{1}{1-\delta}\left(b_i(\hat{\bm\sigma}_i) -  \|\bm\sigma_i\|\right)
\end{align}
(\ref{eq34}) becomes,
\begin{align}
\|\bm\sigma_i\| \leq \delta b_i(\hat{\bm\sigma}_i)
\end{align}
Hence, if $\|\bm\sigma_i\| > \delta b_i(\hat{\bm\sigma}_i)$, then no rating protocol can constitute a PPE.

Next we prove the ``if'' part by construction. We let $\alpha_{i,K-1} = 0$ and hence, the effect rating set is just a binary set $\{K-1, K\}$. The value functions can be determined below,
\begin{equation}
U^*_i(K) = u_i(K, \bm\sigma_i) + \delta U^*_i(K) \label{UK}
\end{equation}
\begin{equation}
\begin{aligned}
&U^*_i(K-1) = u_i(K-1, \bm\sigma_i)\\
& + \delta (\beta_{i,K-1} U^*_i(K) + (1-\beta_{i,K-1}) U^*_i(K-1) \label{UK-1}
\end{aligned}
\end{equation}
The long-term utilities by deviation is
\begin{equation}
\begin{aligned}
&U_i(K, {\bf 0}) = u_i(K, {\bf 0}) \\
&+ \delta(\alpha_{i, K} U^*_i(K-1) + (1 - \alpha_{i, K}) U^*_i(K))
\end{aligned}
\end{equation}
\begin{equation}
U_i(K-1, {\bf 0}) = u_i(K-1, {\bf 0}) + \delta U^*_i(K-1)
\end{equation}

For agent $i$ to have incentives to following the recommended strategy at $\theta_i = K$, we need the following to hold
\begin{align}
u_i(K, {\bf 0}) - u_i(K, \bm\sigma_i) \leq \delta \alpha_{i,K}(U^*_i(K) - U^*_i(K-1)) \label{conditionK}
\end{align}

For agent $i$ to have incentives to following the recommended strategy at $\theta_i = K-1$, we need the following to hold
\begin{align}
u_i(K-1, {\bf 0}) - u_i(K-1, \bm\sigma_i) \leq \delta \beta_{i,K-1}(U^*_i(K) - U^*_i(K-1))\label{conditionK-1}
\end{align}

In the above two inequalities, $U^*_i(K) - U^*_i(K-1)$ can be computed using (\ref{UK}) and (\ref{UK-1}) and is
\begin{align}
U^*_i(K) - U^*_i(K-1) = \frac{u_i(K, \bm\sigma_i) - u_i(K-1, \bm\sigma_i)}{1 - \delta(1 - \beta_{i, K-1})}.
\end{align}

By choosing $\alpha_{i,K} = \beta_{i, K-1} = 1$, both (\ref{conditionK}) and (\ref{conditionK-1}) are satisfied. This means that if $\|\bm\sigma_i\| \leq \delta b_i(\hat{\bm\sigma}_i)$, then we can construct at least one binary rating protocol that constitutes a PPE.

%

%
%
%




\end{document}